\newtheorem{lemma}{Lemma}
\newtheorem{thm}{Theorem}
\newtheorem{remrk}{Remark}
\newtheorem{cor}{Corollary}
\begin{document}

\title{A Geometric Approach to Low-Rank Matrix Completion}

\author{Wei Dai$^{*}$, Ely Kerman$^{**}$, Olgica Milenkovic$^{*}$\\
 $^{*}$Department of Electrical and Computer Engineering, $^{**}$Department
of Mathematics \\
 University of Illinois at Urbana-Champaign\\
 Email: \{weidai07,ekerman,milenkov\}@illinois.edu}
\maketitle
\begin{abstract}
The low-rank matrix completion problem can be succinctly stated as
follows: given a subset of the entries of a matrix, find a low-rank
matrix consistent with the observations. While several low-complexity
algorithms for matrix completion have been proposed so far, it remains
an open problem to devise search procedures with provable performance
guarantees for a broad class of matrix models. The standard approach
to the problem, which involves the minimization of an objective function
defined using the Frobenius metric, has inherent difficulties: the
objective function is not continuous and the solution set is not closed.
To address this problem, we consider an optimization procedure that
searches for a column (or row) space that is \emph{geometrically}
consistent with the partial observations. The geometric objective
function is continuous everywhere and the solution set is the closure
of the solution set of the Frobenius metric. We also preclude the
existence of local minimizers, and hence establish strong performance
guarantees, for special completion scenarios, which do not require
matrix incoherence or large matrix size. 
\end{abstract}

\section{Introduction}

In many practical applications of data acquisition, the signals of
interest have a sparse representation in some basis. That is, they
can be well approximated using only a few basis elements. This allows
for efficient sampling and reconstruction of signals \cite{Donoho_IT2006_CompressedSensing,Candes_Tao_IT2006_Robust_Uncertainty_Principles,Candes_Tao_IT2005_decoding_linear_programming,Recht2007_MatrixReconstruction,Candes_Recht_2008_matrix_completion,candes_tao_power_2009}.
More precisely, the number of linear measurements required to capture
a sparse signal can be much smaller than the number of inherent dimensions
of the signal, and various polynomial time algorithms are known for
accurately reconstructing the sparse signal based on these linear
measurements. Due to the significant reduction in sampling resources
and modest requirements for computational resources, sparse signal
processing has been studied intensively \cite{Donoho_IT2006_CompressedSensing,Candes_Tao_IT2006_Robust_Uncertainty_Principles,Candes_Tao_IT2005_decoding_linear_programming,Recht2007_MatrixReconstruction,Candes_Recht_2008_matrix_completion,candes_tao_power_2009}.

There are two categories of sparse signals which frequently arise
in applications. In the first category, the sparse signal can be modeled
a vector with only a small fraction of non-zero entries. Compressive
sensing is the framework of sampling and recovering such signals.
In the second category, the signals are represented by matrices whose
ranks are much smaller than either of their dimensions. In the second
setting, one of the fundamental problems of sparse signal processing
is the low-rank matrix completion problem -- to determine when and
how one can recover a low-rank matrix based on only a subset of its
entries \cite{Candes_Recht_2008_matrix_completion,candes_tao_power_2009,Chandrasekaran2009_incoherence_matrix_decomposition}.

Scores of methods and algorithms have been proposed for low-rank matrix
completion. Many of them are based on similarities between compressive
sensing reconstruction and low-rank matrix completion. In general,
both reconstruction tasks are ill-posed and computationally intractable.
Nevertheless, exact recovery in an efficient manner is possible for
both signal categories provided that the signal is sufficiently sparse
or sufficiently densely sampled. Casting the sparse signal recovery
problem as an optimization problem, $\ell_{1}$-minimization has been
proposed for compressive sensing signal reconstruction \cite{Donoho_IT2006_CompressedSensing,Candes_Tao_IT2006_Robust_Uncertainty_Principles,Candes_Tao_IT2005_decoding_linear_programming}.
Following the same idea, methods based on nuclear norm minimization
have been developed for low-rank matrix completion \cite{Candes_Recht_2008_matrix_completion,candes_tao_power_2009,candes_matrix_noise_2009,cai_singular_2008}.
In terms of greedy algorithms, many of the approaches for low-rank
completion can be viewed as generalizations of their counterparts
for compressive sensing reconstruction. In particular, the ADMiRA
algorithm \cite{lee_Bresler_admira:_2009} is a counterpart of the
subspace pursuit (SP) \cite{Dai_2009_Subspace_Pursuit} and CoSaMP
\cite{Tropp_2009_CoSaMP} algorithms, while the singular value projection
(SVP) method \cite{Meka2009_SVP} extends the iterative hard thresholding
(IHT) \cite{Blumensath_Davies_2009_IHT} approach. There are also
other approaches that utilize some special structural properties of
the low-rank matrices. Examples include the power factorization algorithm
\cite{Haldar_Hernando_powerfactorization_2009}, the OptSpace algorithm
\cite{montanari_keshavan_matrix_2009}, and the subspace evolution
and transfer algorithm \cite{Dai2010_ICASSP_SET}. 

Nevertheless, there is a fundamental problem in low-rank matrix completion
which has not been successfully addressed yet: how to search for a
low-rank matrix consistent with partial observations. The fundamental
difference between compressive sensing and low-rank matrix completion
lies in the knowledge of the {}``sparse basis''. In compressive
sensing, the basis under which the signal is sparse is known a priori.
In principle, the support set of the nonzero entries can be found
by exhaustive search. However, in low-rank matrix completion, the
corresponding {}``sparse basis'' is not known. Note that the set
of all possible bases forms a continuous space. In such a space, {}``exhaustive''
search is impossible. Moreover, we shall show, in Example 1 of Section
\ref{sec:Geometric-Metric}, that a direct gradient-descent search
does not work either. 

The understanding of the search for consistent matrices is incomplete.
There are two special cases where specially designed algorithms can
guarantee a consistent low-rank solution. The first case is when the
low-rank matrix is fully sampled. The consistent low-rank solution
is simply the observation matrix itself. The corresponding {}``sparse
basis'' (singular vectors) can be easily obtained by a singular value
decomposition. The other case is when the rank equals to one. Given
an arbitrary sampling pattern, one simply looks at the ratios between
the revealed entries in the same column and uses these ratios to construct
a column vector that represents the column space. This method is guaranteed
to find a consistent solution for rank-one matrices. However, it remains
an open problem how to extend this method for general ranks. Hence,
such an approach is not universal. On the other hand, none of existing
general algorithms provides performance guarantee even for the rank-one
case. The performance guarantee of nuclear norm minimization is built
on incoherence conditions, which only holds with high probability
when the low-rank matrix is drawn randomly from certain ensembles
and when the size of the matrix is sufficiently large. Our understanding
of low-rank matrix completion is far from complete. 

Our approach to address these issues is summarized as follows.
\begin{enumerate}
\item We provide a framework for searching for a low-rank matrix that is
\emph{consistent} with the partial observations. There is no requirement
that such a matrix is unique: if there is a unique low-rank solution,
we should be able to find this unique matrix; otherwise, it suffices
to find just one solution that agrees with the revealed entries. In
our approach, we assume that the rank of the underlying low-rank matrix
is known a priori. Finding a consistent low-rank matrix is equivalent
to finding a consistent column/row space. This is different from the
OptSpace algorithm in \cite{montanari_keshavan_matrix_2009}, where
the search is performed on both column and row spaces simultaneously. 
\item We propose a geometric performance metric to measure the consistency
between the estimated column space and the partial observations. In
the literature, the standard approach is to minimize an objective
function that is defined via the Frobenius norm. As we shall illustrate
with explicit examples, this objective function may have singularities,
and therefore the corresponding solution set may not be closed. Hence,
we introduce a new formulation where consistency is now defined in
geometric terms. This allows us to address the difficulties related
to the Frobenius metric. In particular, we show that our geometric
objective function is always continuous. The set of the corresponding
consistent solutions is the closure of the set corresponding to the
Frobenius norm. This new metric allows for provably strong performance
guarantees, described below. 
\item We provide strong performance guarantees for special completion scenarios:
rank-one matrices with arbitrary sampling patterns, and fully sampled
matrices%
\footnote{For full sampled matrices, even though using a simple singular value
decomposition produces a consistent column space, it is not clear
that a randomlly initialized search would converge to a consistent
column space. In what follows, we prove that this is the case. %
} of arbitrary rank. For these two scenarios, a gradient descent search
starting from a random point will converge to a global minimum with
probability one. \emph{More importantly, if the partial observations
admit a unique consistent solution, this search procedure finds this
unique solution with probability one.} The performance guarantees
are different from those previously established in literature. Roughly
speaking, previous performance guarantees require large matrix sizes
and only hold with high probability. Ours hold with probability one
regardless of matrix size. It is also worth noting that we do not
require incoherence conditions, which are essential for the performance
guarantees of nuclear norm minimization. Unfortunately, we are presently
unable to obtain performance guarantees for more general cases.
\end{enumerate}
\vspace{0in}

The paper is organized as follows. In Section~\ref{sec:Low-Rank-Matrix-Completion}
we introduce the low-rank matrix completion problem, and some background
material regarding Grassmann manifolds and their geometry. In Section~\ref{sec:Geometric-Metric}
we show that formulating the low-rank matrix completion problem as
an optimization problem using the Frobenius norm may yield singularities
which can obstruct standard minimization algorithms. We then propose
a new geometric formulation of the problem as a remedy to this difficulty.
This new formulation allows for strong performance guarantees that
are presented in Section~\ref{sec:Performance-Guarantee}. Section
\ref{sec:Conclusion} summarizes the main contributions of the work.
Proofs of the main results are presented in the Appendices.

\section{\label{sec:Low-Rank-Matrix-Completion}Low-Rank Matrix Completion
and Preliminaries}

Let $\bm{X}\in\mathbb{R}^{m\times n}$ be an unknown matrix with rank
$r\leq\min\left(m,n\right)$, and let $\Omega\subset\left[m\right]\times\left[n\right]$
be the set of indices of the observed entries, where $\left[K\right]=\left\{ 1,2,\cdots,K\right\} $.
Define the projection operator $\mathcal{P}_{\Omega}$ by \begin{align*}
\mathcal{P}_{\Omega}:\;\mathbb{R}^{m\times n} & \rightarrow\mathbb{R}^{m\times n}\\
\mathcal{P}_{\Omega}(\bm{X}) & \mapsto\bm{X}_{\Omega},\;\mbox{where }\left(\bm{X}_{\Omega}\right)_{i,j}=\begin{cases}
\bm{X}_{i,j} & \mbox{if }\left(i,j\right)\in\Omega\\
0 & \mbox{if }\left(i,j\right)\notin\Omega\end{cases}.\end{align*}
 The \emph{consistent matrix completion} problem is to find \emph{one}
rank-$r$ matrix $\bm{X}^{\prime}$ that is consistent with the observations
$\bm{X}_{\Omega}$, i.e., \begin{align}
\left(P0\right):\; & \mbox{find a }\bm{X}^{\prime}\mbox{ such that }\nonumber \\
 & \mbox{rank}\left(\bm{X}^{\prime}\right)=r\mbox{ and }\mathcal{P}_{\Omega}\left(\bm{X}^{\prime}\right)=\mathcal{P}_{\Omega}\left(\bm{X}\right)=\bm{X}_{\Omega}.\label{eq:P0}\end{align}
 By definition, this problem is well defined since $\bm{X}_{\Omega}$
is obtained from some rank-$r$ matrix $\bm{X}$ which is therefore
a solution. As in other works, \cite{lee_Bresler_admira:_2009,Haldar_Hernando_powerfactorization_2009,montanari_keshavan_matrix_2009},
we assume that the rank $r$ is given. In practice, one may try to
sequentially guess a rank bound until a satisfactory solution has
been found.

We also introduce the (standard) projection operator $\mathcal{P}$,
\begin{align*}
\mathcal{P}:\;\mathbb{R}^{m}\times\mathbb{R}^{m\times k} & \rightarrow\mathbb{R}^{m}\\
\mathcal{P}\left(\bm{x},\bm{U}\right) & \mapsto\bm{y}=\bm{U}\bm{U}^{\dagger}\bm{x},\end{align*}
 where $1\le k\le m$, and where the superscript $\dagger$ denotes
the pseudoinverse of a matrix. Let $\mbox{span}\left(\bm{U}\right)$
denote the subspace spanned by the columns of the matrix $\bm{U}$,
i.e.,\[
\mbox{span}\left(\bm{U}\right)=\left\{ \bm{v}\in\mathbb{R}^{m}:\;\bm{v}=\bm{U}\bm{w}\mbox{ for some }\bm{w}\in\mathbb{R}^{m}\right\} .\]
 One can describe $\mathcal{P}\left(\bm{x},\bm{U}\right)$, in geometric
terms, as the projection of the vector $\bm{x}$ onto $\mbox{span}\left(\bm{U}\right)$.
It should be observed that $\bm{U}^{\dagger}\bm{x}$ is the global
minimizer of the quadratic optimization problem $\min_{\bm{w}\in\mathbb{R}^{k}}\;\left\Vert \bm{x}-\bm{U}\bm{w}\right\Vert _{2}^{2}.$

\subsection{Search for a consistent column space}

We now show that the problem $\left(P0\right)$ is equivalent to finding
a column space consistent with the observed entries of $\bm{X}$.

Let $\mathcal{U}_{m,r}$ be the set of $m\times r$ matrices with
$r$ orthonormal columns, i.e., $\mathcal{U}_{m,r}=\left\{ \bm{U}\in\mathbb{R}^{m\times r}:\;\bm{U}^{T}\bm{U}=\bm{I}_{r}\right\} .$
Define the function $f_{F}:\;\mathcal{U}_{m,r}\rightarrow\mathbb{R}$
by setting \begin{equation}
f_{F}(\bm{U})=\underset{\bm{W}\in\mathbb{R}^{n\times r}}{\min}\left\Vert \bm{X}_{\Omega}-\mathcal{P}_{\Omega}\left(\bm{U}\bm{W}^{T}\right)\right\Vert _{F}^{2},\label{eq:f_U}\end{equation}
 where $\left\Vert \cdot\right\Vert _{F}$ denotes the Frobenius norm.
This function measures the consistency between the matrix $\bm{U}$
and the observations $\bm{X}_{\Omega}$. In particular, if $f_{F}\left(\bm{U}\right)=0$,
then there exists a matrix $\bm{W}$ such that the rank-$r$ matrix
$\bm{U}\bm{W}^{T}$ satisfies $\mathcal{P}_{\Omega}\left(\bm{U}\bm{W}^{T}\right)=\bm{X}_{\Omega}$.
Hence, the consistent matrix completion problem is equivalent to \begin{align}
\left(P1\right):\; & \mbox{find }\bm{U}\in\mathcal{U}_{m,r}\;\mbox{such that }f_{F}\left(\bm{U}\right)=0.\label{eq:P1}\end{align}

In fact, $f_{F}(U)$ depends only on the subspace $\mbox{span}\left(\bm{U}\right)$
since the columns of a matrix of the form $\bm{U}\bm{W}^{T}$ all
lie in $\mbox{span}\left(\bm{U}\right)$. Hence, to solve the consistent
matrix completion problem, it suffices to find a \emph{column space}
that is consistent with the observed entries. Note that the same conclusion
holds for the row space as well. For simplicity, we restrict our attention
to the column space only.

\subsection{\label{sub:Grassmann-Manifold}Grassmann Manifolds}

The set of column spaces of elements in $\mathcal{U}_{m,r}$ can be
identified with the Grassmann manifold $\mathcal{G}_{m,r}$, the set
of $r$-dimensional subspaces in the $m$-dimensional Euclidean space
$\mathbb{R}^{m}$. This is a smooth compact manifold of dimension
$r(m-r)$. Conversely, every element, say $\mathscr{U}\in\mathcal{G}_{m,r}$
can be presented by a \emph{generator matrix} $\bm{U}\in\mathcal{U}_{m,r}$
satisfying $\mathrm{span}\left(\bm{U}\right)=\mathscr{U}$. However,
this presentation of $\mathscr{U}$ by a generator matrix is clearly
not unique. Nevertheless, it follows from the discussion in the previous
section that the function $f_{F}$ descends to a function on $\mathcal{G}_{m,r}$.
Thus, problem $\left(P1\right)$ can be viewed as an optimization
problem on the compact manifold $\mathcal{G}_{m,r}$.

In this section we recall some facts concerning the geometry of Grassmann
manifolds which will be useful in addressing this and similar optimization
problems. For the proofs of these facts the reader is referred to
\cite{edelman_optimization_manifolds_1998}. We begin by recalling
the construction of the standard Riemannian metric, $g_{m,r}$, on
$\mathcal{G}_{m,r}$. Note that the group $\mathcal{U}_{m,m}$ of
orthogonal $m\times m$ matrices acts transitively on $\mathcal{G}_{m,r}$
(by multiplication on generator matrices). More precisely, $\mathcal{G}_{m,r}$
can be described as a quotient of $\mathcal{U}_{m,m}$, i.e., \[
\mathcal{G}_{m,r}=\mathcal{U}_{m,m}/(\mathcal{U}_{m-r,m-r}\times\mathcal{U}_{r,r})\]
 Now, as a compact Lie group, $\mathcal{U}_{m,m}$ has a standard
(bi-invariant) Riemannian metric (can be defined by using inner product
in the tangent space). This descends to the quotient $\mathcal{G}_{m,r}$
as the metric $g_{m,r}$. By construction, $g_{m,r}$ is invariant
under the action of $\mathcal{U}_{m,m}$.

The metric $g_{m,r}$ determines a chordal distance function and geodesic
curves on $\mathcal{G}_{m,r}$ which will play an important role in
what follows. To obtain the relevant formulas for these objects we
require the notion of the \emph{principal angles} between two subspaces
\cite{Conway_96_PackingLinesPlanes,Dai2008_small_ball_quantization_GM}.
Consider the subspaces $\mbox{span}\left(\bm{U}\right)$ and $\mbox{span}\left(\bm{V}\right)$
of $\mathbb{R}^{m}$ for some $\bm{U}\in\mathcal{U}_{m,p}$ and $\bm{V}\in\mathcal{U}_{m,q}$.
The principal angles between these two subspaces can be defined in
the following constructive manner. Without loss of generality, assume
that  $1\le p\le q\le m$. Let $\bm{u}_{1}\in\mbox{span}\left(\bm{U}\right)$
and $\bm{v}_{1}\in\mbox{span}\left(\bm{V}\right)$ be unit-length
vectors such that $\left|\bm{u}_{1}^{T}\bm{v}_{1}\right|$ is maximal.
Inductively, let $\bm{u}_{k}\in\mbox{span}\left(\bm{U}\right)$ and
$\bm{v}_{k}\in\mbox{span}\left(\bm{V}\right)$ be unit vectors such
that $\bm{u}_{k}^{T}\bm{u}_{j}=0$ and $\bm{v}_{k}^{T}\bm{v}_{j}=0$
for all $1\le j<k$ and $\left|\bm{u}_{k}^{T}\bm{v}_{k}\right|$ is
maximal. The principal angles are then defined as \[
\alpha_{k}=\arccos\bm{u}_{k}^{T}\bm{v}_{k}\]
 for $k=1,2,\cdots,p$.

Alternatively, the principal angles can be computed via singular value
decomposition. Consider the singular value decomposition $\bm{U}\bm{U}^{T}\bm{V}\bm{V}^{T}=\bar{\bm{U}}\bm{\Lambda}\bar{\bm{V}}^{T}$,
where $\bar{\bm{U}}\in\mathcal{U}_{m,p}$ and $\bar{\bm{V}}\in\mathcal{U}_{m,p}$
contain the first $p$ left and right singular vectors, respectively,
and $\bm{\Lambda}\in\mathbb{R}^{p\times p}$ is a diagonal matrix
comprised of singular values $\lambda_{1}\ge\cdots\ge\lambda_{p}$.
Then the $k^{th}$ columns of $\bar{\bm{U}}$ and $\bar{\bm{V}}$
correspond to the vectors $\bm{u}_{k}$ and $\bm{v}_{k}$ used in
the constructive definition, respectively. The $k^{th}$ singular
value $\lambda_{k}$ defines the $k^{th}$ principal angle $\alpha_{k}$
via \[
\cos\alpha_{k}=\lambda_{k}.\]

\medskip{}

\noindent \textbf{Chordal distance on $\mathcal{G}_{m,r}$.} For $\bm{U}_{1}$
and $\bm{U}_{2}$ in $\mathcal{U}_{m,r}$, the \emph{chordal distance}
between the two subspaces $\mbox{span}\left(\bm{U}_{1}\right)$ and
$\mbox{span}\left(\bm{U}_{2}\right)$ in $\mathcal{G}_{m,r}$ is given,
in terms of the $p$ principal angles between them, via the formula
\[
\sqrt{\sum_{k=1}^{r}\sin^{2}\alpha_{k}}.\]
The chordal distance can also be expressed in terms of singular values
as \[
\sqrt{\sum_{k=1}^{r}(1-\lambda_{k}^{2})}.\]

\medskip{}

\noindent \textbf{Geodesics on $\mathcal{G}_{m,r}$.} We will use
the gradient descent method on $\mathcal{G}_{m,r}$ to search for
consistent column spaces. This will require some information concerning
the geodesics of the metric $g_{m,r}$ on $\mathcal{G}_{m,r}$ which
we now recall. 

Roughly speaking, a geodesic in a manifold is a generalization of
the notion of a straight line in the Euclidean space: given any two
points in $\mathcal{G}_{m,r}$, among all curves that connect these
two points, the one of the shortest length is geodesic. More precisely,
fix a subspace $\mathscr{U}$ in $\mathcal{G}_{m,r}$ and a tangent
vector $\mathscr{H}$ to $\mathcal{G}_{m,r}$ at $\mathscr{U}$. Let
$\bm{U}\in\mathcal{U}_{m,r}$ be a generator matrix for $\mathscr{U}$.
The tangent space to $\mathcal{G}_{m,r}$ at $\mathscr{U}$ can be
identified with the set of \emph{horizontal} tangent vectors to $\bm{U}$,
i.e., the set of tangent vectors $\bm{W}$ at $\bm{U}$ which satisfy
$\bm{U}^{T}\bm{W}=0$ \cite{edelman_optimization_manifolds_1998}.
Let $\bm{H}\in\mathbb{R}^{m\times r}$ be the horizontal tangent vector
at $\bm{U}$ which corresponds to $\mathscr{H}$ and set \begin{equation}
\bm{U}\left(t\right)=\left[\bm{U}\bm{V}_{H},\bm{U}_{H}\right]\left[\begin{array}{c}
\cos\left(\bm{S}_{H}t\right)\\
\sin\left(\bm{S}_{H}t\right)\end{array}\right]\bm{V}_{H}^{T},\label{eq:geodesic}\end{equation}
 where $\bm{U}_{H}\bm{S}_{H}\bm{V}_{H}^{T}$ is the compact singular
value decomposition of $\bm{H}$. Then $\mathrm{span}\left(\bm{U}\left(t\right)\right)$
is the unique geodesic of $g_{m,r}$ which starts at $\mathscr{U}$
with {}``initial velocity'' $\mathscr{H}$.

We now use this general solution for the geodesic flow of $g_{m,r}$
to establish the following technical result concerning geodesics between
a given pair of subspaces. 
\begin{lemma}
\label{lem:Grassmann-Manifold} Fix two elements $\bm{U}_{1}$ and
$\bm{U}_{2}$ of $\mathcal{U}_{m,r}$. Let $\bm{V}_{1}\bm{\Lambda}\bm{V}_{2}^{T}$
be the singular value decomposition of the matrix $\bm{U}_{1}^{T}\bm{U}_{2}$,
and denote the $i^{th}$ singular value by $\lambda_{i}=\cos\alpha_{i}$.
Set $\bar{\bm{U}}_{1}=\bm{U}_{1}\bm{V}_{1}$ and $\bar{\bm{U}}_{2}=\bm{U}_{2}\bm{V}_{2}$
and note that $\bar{\bm{U}}_{1}^{T}\bar{\bm{U}}_{2}=\bm{\Lambda}.$ 
\begin{enumerate}
\item Consider the path \begin{equation}
\bm{U}\left(t\right)=\left[\bar{\bm{U}}_{1},\bm{G}\right]\left[\begin{array}{c}
\mbox{diag}\left(\left[\cdots,\cos\alpha_{i}t,\cdots\right]\right)\\
\mbox{diag}\left(\left[\cdots,\sin\alpha_{i}t,\cdots\right]\right)\end{array}\right]\bm{V}_{1}^{T},\label{eq:geo-path}\end{equation}
 where the columns of $\bm{G}=\left[\cdots,\bm{g}_{i},\cdots\right]\in\mathbb{R}^{m\times r}$
are defined as follows \[
\bm{g}_{i}=\begin{cases}
\frac{\bar{\bm{U}}_{2,:i}-\lambda_{i}\bar{\bm{U}}_{1,:i}}{\left\Vert \bar{\bm{U}}_{2,:i}-\lambda_{i}\bar{\bm{U}}_{1,:i}\right\Vert } & \;\mbox{if }\lambda_{i}\ne1,\\
\bm{0} & \;\mbox{if }\lambda_{i}=1.\end{cases}\]
Here, the subscript $_{:i}$ denotes the $i^{th}$ column of the corresponding
matrix. Then the path $\mbox{span}\left(\bm{U}\left(t\right)\right)$
is a geodesic of $g_{m,r}$ such that $\mbox{span}\left(\bm{U}\left(0\right)\right)=\mbox{span}\left(\bm{U}_{1}\right)$
and $\mbox{span}\left(\bm{U}\left(1\right)\right)=\mbox{span}\left(\bm{U}_{2}\right)$. 
\item Let $\bar{\bm{x}}\in\mbox{span}\left(\bm{U}_{2}\right)$ be a unit-norm
vector. It's clear that there exists a unique $\bar{\bm{w}}\in\mathcal{U}_{r,1}$
such that $\bar{\bm{x}}=\bar{\bm{U}}_{2}\bar{\bm{w}}$. Suppose that
$\bar{\bm{x}}\notin\mbox{span}\left(\bar{\bm{U}}_{1}\right)$. Let
$k$ the number of the singular values of $\bar{\bm{U}}_{1}^{T}\bar{\bm{U}}_{2}$
that equal to one. Then $k<r$ and there exists an index $j\in\left[r\right]$
such that $k<j\le r$ and $\bar{w}_{j}\ne0$. 
\end{enumerate}
\end{lemma}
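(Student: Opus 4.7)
The lemma has two parts, which I would prove in sequence. For part 1, I would first verify the endpoint conditions and then match the path (3) against the general geodesic formula (2) at a suitable generator of $\mathrm{span}(\bm{U}_1)$; the bulk of the work lies in the matching step. Part 2 then follows from one linear-algebraic observation already needed in part 1.

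At $t=0$ the matrix in (3) collapses to $\bar{\bm{U}}_1 \bm{V}_1^T = \bm{U}_1 \bm{V}_1 \bm{V}_1^T = \bm{U}_1$, since $\bm{V}_1$ is orthogonal. At $t=1$ the $i$-th column of $[\bar{\bm{U}}_1,\bm{G}]\bigl[\cos;\sin\bigr]$ equals $\lambda_i \bar{\bm{U}}_{1,:i} + \sin\alpha_i\,\bm{g}_i$. When $\lambda_i<1$ I would use the identity $\|\bar{\bm{U}}_{2,:i} - \lambda_i \bar{\bm{U}}_{1,:i}\|^2 = 1 - \lambda_i^2 = \sin^2\alpha_i$, which follows from $\bar{\bm{U}}_1^T \bar{\bm{U}}_2 = \bm{\Lambda}$ and $\bar{\bm{U}}_2^T \bar{\bm{U}}_2 = \bm{I}$, to rewrite $\bm{g}_i$ so that the column simplifies to $\bar{\bm{U}}_{2,:i}$. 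When $\lambda_i=1$ both $\sin\alpha_i$ and $\bm{g}_i$ vanish, but the equality $\bar{\bm{U}}_{1,:i}^T \bar{\bm{U}}_{2,:i}=1$ together with unit norms forces $\bar{\bm{U}}_{1,:i} = \bar{\bm{U}}_{2,:i}$, so the column is again $\bar{\bm{U}}_{2,:i}$. Hence $\bm{U}(1) = \bar{\bm{U}}_2 \bm{V}_1^T$, whose span is $\mathrm{span}(\bm{U}_2)$.

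To certify that (3) is a geodesic I would instantiate (2) at the generator $\bm{U}_1$ with $\bm{V}_H = \bm{V}_1$, $\bm{U}_H = \bm{G}$, $\bm{S}_H = \mathrm{diag}(\alpha_i)$, so that $\bm{U}\bm{V}_H = \bm{U}_1 \bm{V}_1 = \bar{\bm{U}}_1$ and the candidate horizontal tangent is $\bm{H} = \bm{G}\,\mathrm{diag}(\alpha_i)\,\bm{V}_1^T$. Two conditions need checking: horizontality $\bm{U}_1^T \bm{H} = 0$, and that $\bm{U}_H \bm{S}_H \bm{V}_H^T$ is genuinely a compact SVD of $\bm{H}$. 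Horizontality reduces to $\bar{\bm{U}}_1^T \bm{g}_i = 0$ (the two conditions are equivalent because $\bm{V}_1$ is orthogonal), which holds because $\bar{\bm{U}}_1^T \bar{\bm{U}}_{2,:i} = \lambda_i \bm{e}_i = \lambda_i \bar{\bm{U}}_1^T \bar{\bm{U}}_{1,:i}$. The more delicate requirement is orthonormality of the nonzero columns of $\bm{G}$; this I would verify by expanding $\bm{g}_i^T \bm{g}_j$ and invoking $\bar{\bm{U}}_1^T\bar{\bm{U}}_1 = \bar{\bm{U}}_2^T\bar{\bm{U}}_2 = \bm{I}$ together with $\bar{\bm{U}}_1^T\bar{\bm{U}}_2 = \bm{\Lambda}$, after which the four cross terms collapse to $\delta_{ij}$. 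Columns with $\lambda_i = 1$ contribute nothing to $\bm{H}$ because their $\alpha_i = 0$, so they are simply absent from the compact SVD. I expect this orthonormality check to be the main bookkeeping obstacle, but it is mechanical once these identities are in place.

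For part 2, since all singular values lie in $[0,1]$ and are sorted non-increasingly, the indices with $\lambda_i=1$ are exactly $i=1,\dots,k$, and by the equality argument above each of them satisfies $\bar{\bm{U}}_{1,:i} = \bar{\bm{U}}_{2,:i}$. If $k=r$ then $\mathrm{span}(\bar{\bm{U}}_1) = \mathrm{span}(\bar{\bm{U}}_2) = \mathrm{span}(\bm{U}_2) \ni \bar{\bm{x}}$, contradicting $\bar{\bm{x}} \notin \mathrm{span}(\bar{\bm{U}}_1)$; hence $k < r$. If additionally $\bar{w}_j=0$ for all $j>k$, then $\bar{\bm{x}} = \sum_{i=1}^k \bar{w}_i \bar{\bm{U}}_{2,:i} = \sum_{i=1}^k \bar{w}_i \bar{\bm{U}}_{1,:i} \in \mathrm{span}(\bar{\bm{U}}_1)$, another contradiction. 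Therefore some $\bar{w}_j$ with $k < j \le r$ must be nonzero.
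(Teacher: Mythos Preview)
Your proposal is correct and follows essentially the same route as the paper: verify the endpoints via the identity $\|\bar{\bm{U}}_{2,:i}-\lambda_i\bar{\bm{U}}_{1,:i}\|^2=1-\lambda_i^2$, set $\bm{H}=\bm{G}\,\mathrm{diag}(\alpha_i)\,\bm{V}_1^T$, check horizontality through $\bar{\bm{U}}_1^T\bm{g}_i=0$, and then invoke the general geodesic formula; part~2 is handled by the same pair of contradictions. The one place you go beyond the paper is in explicitly verifying orthonormality of the nonzero columns of $\bm{G}$, which the paper simply asserts when it declares $\bm{G}\,\mathrm{diag}(\alpha_i)\,\bm{V}_1^T$ to be a compact SVD---your extra bookkeeping step is warranted and fills a small gap.
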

\begin{proof}
Clearly, $\bm{U}\left(0\right)=\bm{U}_{1}$. Since $\bar{\bm{U}}_{1}^{T}\bar{\bm{U}}_{2}=\bm{\Lambda},$
we have \begin{align*}
 & \left\Vert \bar{\bm{U}}_{2,:i}-\lambda_{i}\bar{\bm{U}}_{1,:i}\right\Vert ^{2}\\
 & =1-2\lambda_{i}\left\langle \bar{\bm{U}}_{2,:i},\bar{\bm{U}}_{1,:i}\right\rangle +\lambda_{i}^{2}\\
 & =1-\lambda_{i}^{2}.\end{align*}
Thus, we have \begin{align*}
\bm{U}\left(1\right) & =\left[\cdots,\bar{\bm{U}}_{1,:i}\cos\alpha_{i}+\bm{g}_{i}\sin\alpha_{i},\cdots\right]\bm{V}_{1}^{T}\\
 & =\left[\cdots,\bar{\bm{U}}_{1,:i}\lambda_{i}+\bm{g}_{i}\sqrt{1-\lambda_{i}^{2}},\cdots\right]\bm{V}_{1}^{T}\\
 & =\left[\cdots,\bar{\bm{U}}_{1,:i}\lambda_{i}+\bm{g}_{i}\left\Vert \bar{\bm{U}}_{2,:i}-\lambda_{i}\bar{\bm{U}}_{1,:i}\right\Vert ,\cdots\right]\bm{V}_{1}^{T}\\
 & =\left(\bar{\bm{U}}_{1}\bm{\Lambda}+\left(\bar{\bm{U}}_{2}-\bar{\bm{U}}_{1}\bm{\Lambda}\right)\right)\bm{V}_{1}^{T}\\
 & =\bm{U}_{2}\bm{V}_{2}\bm{V}_{1}^{T}.\end{align*}
 Hence, $\mbox{span}\left(\bm{U}\left(1\right)\right)=\mbox{span}\left(\bm{U}_{2}\right)$.
To prove the first part of the lemma it just remains to show that
$\mbox{span}\left(\bm{U}(t)\right)$ is geodesic. Setting $\bm{H}=\dot{\bm{U}}\left(0\right)$
we have \begin{equation}
\bm{H}=\bm{G}\mbox{diag}\left(\left[\cdots,\alpha_{i},\cdots\right]\right)\bm{V}_{1}^{T}.\label{svH}\end{equation}
 We first verify that the tangent vector $\bm{H}$ is horizontal which
is equivalent to showing that $\bm{U}_{1}^{T}\bm{H}=0$. According
to the definition of the vectors $\bm{g}_{i}$, when $\lambda_{i}\ne1$,
one has \[
\left\Vert \bar{\bm{U}}_{2,:i}-\lambda_{i}\bar{\bm{U}}_{1,:i}\right\Vert \ne0\]
 and \begin{align*}
\bar{\bm{U}}_{1}^{T}\bm{g}_{i} & =\frac{1}{\left\Vert \bar{\bm{U}}_{2,:i}-\lambda_{i}\bar{\bm{U}}_{1,:i}\right\Vert }\bar{\bm{U}}_{1}^{T}\left(\bar{\bm{U}}_{2,:i}-\lambda_{i}\bar{\bm{U}}_{1,:i}\right)\\
 & =\frac{1}{\left\Vert \bar{\bm{U}}_{2,:i}-\lambda_{i}\bar{\bm{U}}_{1,:i}\right\Vert }\lambda_{i}\bm{e}_{i}-\lambda_{i}\bm{e}_{i}=\bm{0}.\end{align*}
 Hence, \[
\bm{U}_{1}^{T}\bm{G}=\bm{V}_{1}^{T}\bar{\bm{U}}_{1}^{T}\bm{G}=\bm{0}.\]
 By \eqref{svH}, this implies that $\bm{U}_{1}^{T}\bm{H}=0$, as
desired. Note that equation \eqref{svH} can also be viewed as an
expression for the compact singular value decomposition of $\bm{H}$.
It then follows directly from \eqref{eq:geodesic} that $\mbox{span}\left(\bm{U}(t)\right)$
is indeed a geodesic.

To prove the second part of the lemma, let $\bm{u}_{1,1},\cdots,\bm{u}_{1,r}$
and $\bm{u}_{2,1},\cdots,\bm{u}_{2,r}$ be the column vectors of the
matrix $\bar{\bm{U}}_{1}$ and $\bar{\bm{U}}_{2}$, respectively.
By assumption, $\lambda_{1}=\cdots=\lambda_{k}=1$ and $1>\lambda_{k+1}\ge\cdots\ge\lambda_{r}$.
Hence, \begin{align*}
 & \bm{u}_{1,j}=\bm{u}_{2,j},\;\mbox{for all }j\le k,\;\mbox{and}\\
 & \left\langle \bm{u}_{1,j},\bm{u}_{2,j}\right\rangle =\lambda_{j}<1,\;\mbox{for all }k<j\le r.\end{align*}
 Suppose that $k=r$. Then \[
\bar{\bm{x}}=\bar{\bm{U}}_{2}\bar{\bm{w}}=\bar{\bm{U}}_{2}\bar{\bm{w}}\in\mbox{span}\left(\bar{\bm{U}}_{1}\right),\]
 which contradicts the assumption that $\bar{\bm{x}}\notin\mbox{span}\left(\bar{\bm{U}}_{1}\right)$.
Hence, we have $k<r$. Now suppose that $\bar{w}_{k+1}=\cdots=\bar{w}_{r}=0$.
Then \[
\bar{\bm{x}}=\sum_{j=1}^{k}\bm{u}_{2,j}\bar{w}_{j}=\sum_{j=1}^{k}\bm{u}_{1,j}\bar{w}_{j}\in\mbox{span}\left(\bm{U}_{1}\right),\]
 which again contradicts the assumption that $\bar{\bm{x}}\notin\mbox{span}\left(\bm{U}_{1}\right)$.
Hence, there exists a $j$ such that $k<j\le r$ and $\bar{w}_{j}\ne0$.
This completes the proof. 
\end{proof}
\medskip{}

\noindent \textbf{An invariant measure on $\mathcal{G}_{m,r}$.} The
space $\mathcal{U}_{m,m}$ admits a standard invariant measure (the
Haar measure) \cite{James_54_Normal_Multivariate_Analysis_Orthogonal_Group}.
This descends to a measure $\mu$ on $\mathcal{G}_{m,r}$ which is
also invariant in the following sense: for any measurable set $\mathcal{M}\subset\mathcal{G}_{m,r}$
and any $\bm{A}\in\mathcal{U}_{m,m}$, one has $\mu\left(\mathcal{M}\right)=\mu\left(\bm{A}\mathcal{M}\right)$,
where $\bm{A}\mathcal{M}=\left\{ \mbox{span}\left(\bm{A}\bm{U}\right):\;\bm{U}\in\mathcal{U}_{m,r},\;\mbox{span}\left(\bm{U}\right)\in\mathcal{M}\right\} $
\cite{James_54_Normal_Multivariate_Analysis_Orthogonal_Group,Dai2008_small_ball_quantization_GM}.
This invariant measure defines the uniform/isotropic distribution
on the Grassmann manifold. Furthermore, let $\mbox{span}\left(\bm{U}\right)\in\mathcal{G}_{m,r}$
be fixed and $\mbox{span}\left(\bm{V}\right)\in\mathcal{G}_{m,r}$
be drawn randomly from the isotropic distribution. The joint probability
density function of the principal angles between the spans of $\bm{U}$
and $\bm{V}$ is explicitly given in \cite{James_54_Normal_Multivariate_Analysis_Orthogonal_Group,Adler_2004_Integrals_Grassmann,Dai2008_small_ball_quantization_GM,Dai_Globecom07_large_balls}.
Two properties of this density function will be relevant to our later
analysis: first, it is independent of the choice of $\bm{U}$; second,
there is no mass point.

\section{\label{sec:Geometric-Metric}From the Frobenius Norm to the Geometric
Metric}

In the previous section, we showed that the matrix completion problem
reduces to a search for a consistent column space. In other words,
one only needs to find a global minimum of the objective function
$f_{F}\left(\bm{U}\right),$ where \begin{equation}
f_{F}\left(\bm{U}\right)\triangleq\underset{\bm{W}\in\mathbb{R}^{r\times n}}{\min}\left\Vert \bm{X}_{\Omega}-\mathcal{P}_{\Omega}\left(\bm{U}\bm{W}\right)\right\Vert _{F}^{2}.\label{eq:objective-fn-Frobenius}\end{equation}
However, as we shall show in Section \ref{sub:Frobenius-Norm-Fails},
this approach has a serious drawback: the objective function (\ref{eq:objective-fn-Frobenius})
is not a continuous function of the variable $\bm{U}$. The discontinuity
of the objective function is due to the composition of the Frobenius
norm with the projection operator $\mathcal{P}_{\Omega}$. It may
prevent gradient-descent-based algorithms from converging to a global
optimum (see \cite{Dai2010_ICASSP_SET} for a detailed example). To
address this issue, we propose another objective function $f_{G}\left(\bm{U}\right)$
based on the geometry of the problem, detailed in Section \ref{sub:Geometric-Metric}.
To solve the matrix completion problem, one then needs to solve the
problem \begin{equation}
\left(P2\right):\;\mbox{find a }\bm{U}\in\mathcal{U}_{m,r}\;\mbox{such that }f_{G}\left(\bm{U}\right)=0.\label{eq:matrix-completion-chordal-dist}\end{equation}
 where $f_{G}$ denotes the geometric metric, which is formally defined
in Section \ref{sub:Geometric-Metric}. 

In the rest of this section, we shall show that the new objective
function $f_{G}$ is a continuous function. Furthermore, we shall
show that the preimage of $f_{G}\left(\bm{U}\right)=0$ is the \emph{closure}
of the preimage of $f_{F}\left(\bm{U}\right)=0$. Because of these
nice properties of the geometric objective function, one can derive
strong performance guarantees for gradient descent methods, as described
in Section \ref{sec:Performance-Guarantee}.

\subsection{\label{sub:Frobenius-Norm-Fails}Why the Frobenius Norm Fails}

We use an example to show that the objective function (\ref{eq:objective-fn-Frobenius})
based on the Frobenius norm is not continuous. Let $\bm{x}_{\Omega,i}$
be the $i^{th}$ column of the matrix $\bm{X}_{\Omega}$. Let $\Omega_{i}\subset\left[m\right]$
be the set of indices of known entries in the $i^{th}$ column. We
use $\mathcal{P}_{\Omega,i}$ to denote the projection operator corresponding
to the index set of $\Omega_{i}$. By additivity of the squared Frobenius
norm, the objective function can be written as a sum of atomic functions,
i.e., \begin{align*}
f_{F}\left(\bm{U}\right) & =\underset{\bm{W}\in\mathbb{R}^{r\times n}}{\min}\left\Vert \bm{X}_{\Omega}-\mathcal{P}_{\Omega}\left(\bm{U}\bm{W}\right)\right\Vert _{F}^{2}\\
 & =\sum_{i=1}^{n}\underbrace{\underset{\bm{w}_{i}\in\mathbb{R}^{r}}{\min}\left\Vert \bm{x}_{\Omega,i}-\mathcal{P}_{\Omega,i}\left(\bm{U}\bm{w}_{i}\right)\right\Vert _{F}^{2}}_{f_{F,i}\left(\bm{U}\right)}.\end{align*}
 Denote the $i^{th}$ \emph{atomic function} by $f_{F,i}\left(\bm{U}\right)$.
It can be verified that \begin{align*}
f_{F,i}\left(\bm{U}\right) & =\underset{\bm{w}\in\mathbb{R}^{r}}{\min}\left\Vert \bm{x}_{\Omega,i}-\mathcal{P}_{\Omega,i}\left(\bm{U}\bm{w}_{i}\right)\right\Vert _{F}^{2}\\
 & =\left\Vert \bm{x}_{\Omega,i}-\mathcal{P}\left(\bm{x}_{\Omega,i},\bm{U}_{\Omega_{i}}\right)\right\Vert _{F}^{2.},\end{align*}
where $\bm{U}_{\Omega_{i}}=\left[\mathcal{P}_{\Omega,i}\left(\bm{u}_{1}\right),\cdots,\mathcal{P}_{\Omega,i}\left(\bm{u}_{r}\right)\right]$
and $\bm{u}_{1},\cdots,\bm{u}_{r}$ are column vectors of the matrix
$\bm{U}$. We show in the next example that an atomic function, say
$f_{F,1}\left(\bm{U}\right)$, may not be continuous. 

\vspace{0.03in}

\emph{Example 1:} \label{Example-1}Suppose that $\bm{x}_{\Omega,1}=\left[0,1,1\right]^{T}$
and $\Omega_{1}=\left\{ 2,3\right\} $. Let $\bm{U}$ be of the form
$\bm{U}=\left[\sqrt{1-2\epsilon^{2}},\epsilon,\epsilon\right]^{T}\in\mathcal{U}_{3,1}$
where $\epsilon\in\left[-1/\sqrt{2},1/\sqrt{2}\right]$. For a given
$\bm{U}$, the atomic function $f_{F,1}\left(\bm{U}\right)$ is given
by \[
f_{F,1}\left(\bm{U}\right)=\underset{w\in\mathbb{R}}{\min}\left\Vert \left[0,1,1\right]^{T}-\mathcal{P}_{\Omega,1}\left(\bm{U}w\right)\right\Vert _{F}^{2}.\]
This is a quadratic optimization problem and can be easily solved.
The optimal $w^{*}$ is given by\[
w^{*}=\begin{cases}
\frac{2}{\epsilon} & \;\mbox{if }\epsilon\ne0,\\
0 & \;\mbox{if }\epsilon=0.\end{cases}\]
 Hence, one has \[
f_{F,1}\left(\bm{U}\left(\epsilon\right)\right)=\begin{cases}
0 & \;\mbox{if }\epsilon\in\left[-\frac{1}{\sqrt{2}},0\right)\bigcup\left(0,\frac{1}{\sqrt{2}}\right],\\
2 & \;\mbox{if }\epsilon=0.\end{cases}\]
which shows that $f_{F,1}\left(\bm{U}\left(\epsilon\right)\right)$
has a singular point at $\epsilon=0$.

\vspace{0.03in}

It is straightforward to verify that the overall objective function
(\ref{eq:objective-fn-Frobenius}) is also a discontinuous function
of $\bm{U}$. As we argued in \cite{Dai2010_ICASSP_SET}, this discontinuity
creates so called barriers, which may prevent gradient-descent algorithms
from converging to a global minimum. Hence, one seeks an optimization
criteria that will allow for a continuous objective function and consequently,
no search path barriers.

\subsection{\label{sub:Geometric-Metric}A Geometric Metric}

To address the problem due to the singularities of the objective functions,
we propose to replace the Frobenius norm by a geometric performance
metric.

In this case, the objective function is defined as \[
f_{G}\left(\bm{U}\right)=\sum_{i=1}^{n}f_{G,i}\left(\bm{U}\right),\]
 where $f_{G,i}\left(\bm{U}\right)$ denotes the geometric metric
corresponding to the $i^{th}$ column, defined as follows. If $\bm{x}_{\Omega,i}=\bm{0}$,
we set $f_{G,i}\left(\bm{U}\right)=0$. Henceforth, we only consider
the case when $\bm{x}_{\Omega,i}\ne\bm{0}$. For any $\bm{x}_{\Omega,i}\ne\bm{0}$,
let $\bar{\bm{x}}_{\Omega,i}=\bm{x}_{\Omega,i}/\left\Vert \bm{x}_{\Omega,i}\right\Vert _{F}$
be the normalized vector $\bm{x}_{\Omega,i}$. Let $\Omega_{i}^{c}=\left\{ 1,2,\cdots,m\right\} \backslash\Omega_{i}$
be the complement of $\Omega_{i}$. Let $\bm{e}_{k}\in\mathbb{R}^{m}$
be the $k^{th}$ natural basis vector, i.e., the $k^{th}$ entry of
$\bm{e}_{k}$ equals to one and all other entries are zero. Define
\begin{equation}
\bm{B}_{i}=\left[\bar{\bm{x}}_{\Omega,i},\bm{e}_{k_{1}},\cdots,\bm{e}_{k_{\ell}}\right],\label{eq:def-Bi}\end{equation}
 where $\left\{ k_{1},\cdots,k_{\ell}\right\} =\Omega_{i}^{c}$. Let
$\lambda_{\max}\left(\bm{B}_{i}^{T}\bm{U}\right)$ be the largest
singular value of the matrix $\bm{B}_{i}^{T}\bm{U}$. Then \begin{equation}
f_{G,i}\left(\bm{U}\right)=1-\lambda_{\max}^{2}\left(\bm{B}_{i}^{T}\bm{U}\right).\label{eq:def-geometric-metric}\end{equation}

This expression is closely related to the chordal distance between
two subspaces, as described in Section \ref{sub:Grassmann-Manifold}.
We henceforth refer to the function (\ref{eq:def-geometric-metric})
either as the \emph{geometric metric} (\ref{eq:def-geometric-metric}),
or with slight abuse of terminology, as the chordal distance.

One advantage of the chordal distance is its continuity. This follows
directly from the continuity of the singular values of the underlying
matrix. Recall Example 1. In Fig. \ref{fig:Contours}, we illustrate
the differences between $f_{F,1}$ and $f_{G,1}$ by projecting their
contours of constant value onto the $u_{2}$-$u_{3}$ plane.

\begin{figure}
\includegraphics[scale=0.4]{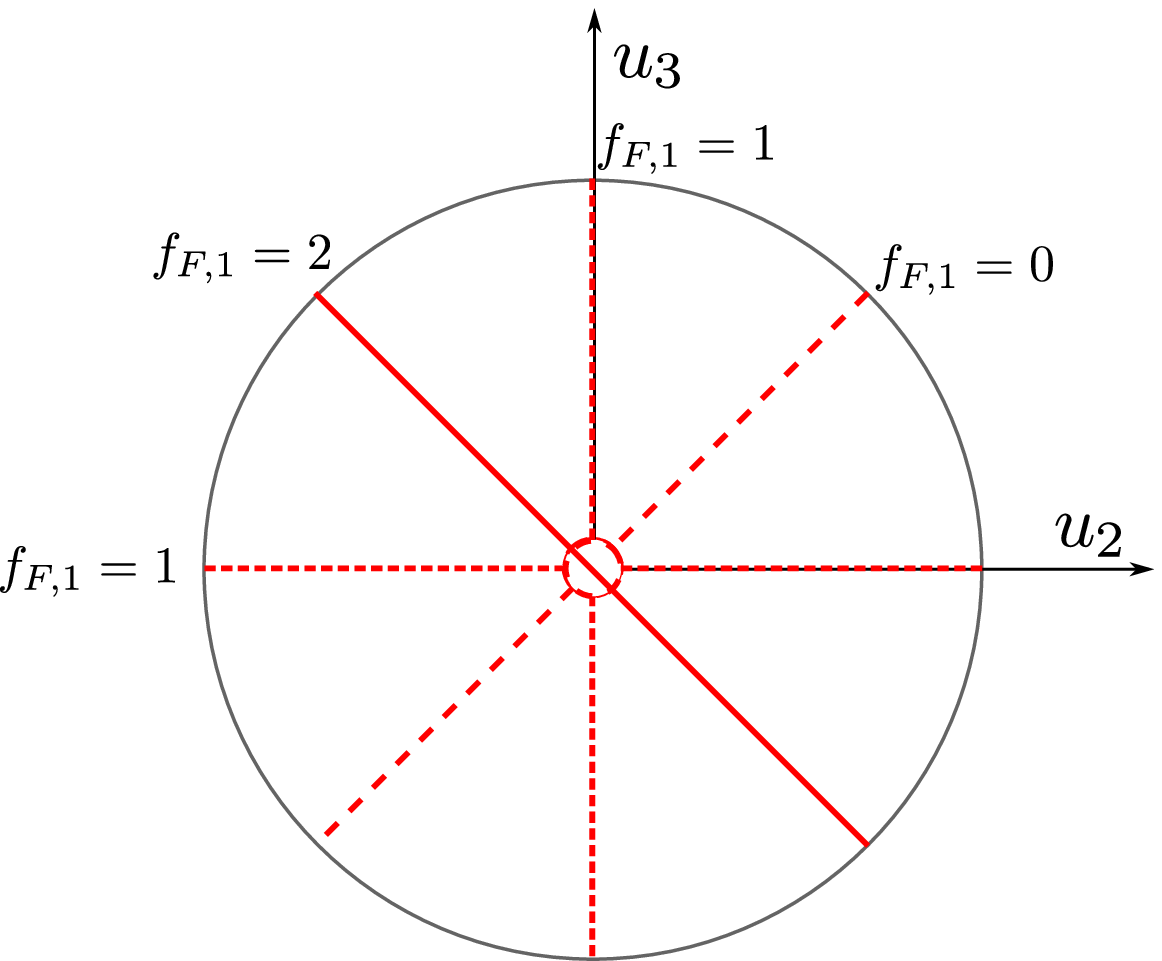}$\qquad$\includegraphics[scale=0.4]{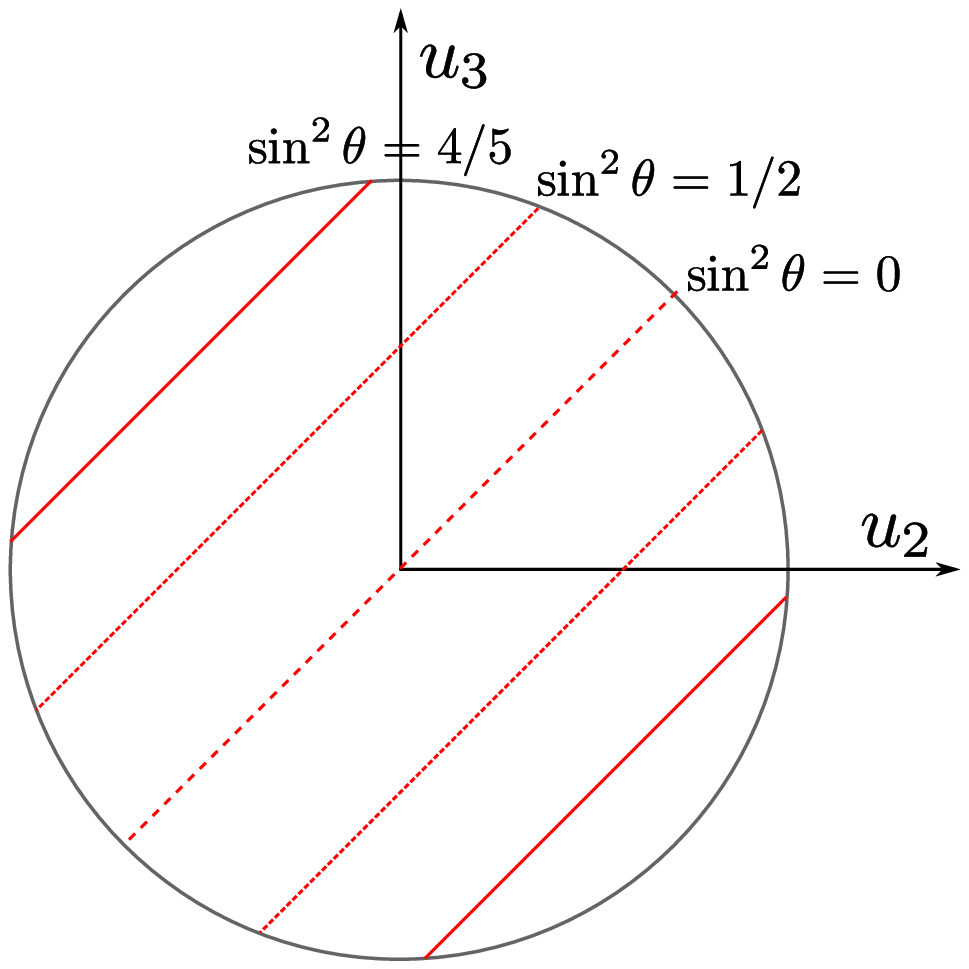}

\caption{\label{fig:Contours}Contours projected to the $\left(u_{2},u_{3}\right)$
plane. The left depicts the contours of the squared Frobenius norm.
The right corresponds to the chordal distance. }

\end{figure}

More importantly, the following theorem shows that the preimage of
$f_{G,i}\left(\bm{U}\right)=0$ is actually the closure of the preimage
of $f_{F,i}\left(\bm{U}\right)=0$. 
\begin{thm}
\label{thm:closure}Given $\bm{x}_{\Omega,i}\in\mathbb{R}^{m}$ and
$\Omega_{i}\subset\left[m\right]$. Let $\bm{U}_{\Omega_{i}}\in\mathbb{R}^{m\times r}$
be such that $\left(\bm{U}_{\Omega_{i}}\right)_{k,\ell}=\bm{U}_{k,\ell}$
if $k\in\Omega_{i}$ and $\left(\bm{U}_{\Omega_{i}}\right)_{k,\ell}=0$
if $k\notin\Omega_{i}$. Define \[
\mathcal{U}_{F,i}=\left\{ \bm{U}\in\mathcal{U}_{m,r}:\; f_{F,i}\left(\bm{U}\right)=\left\Vert \bm{x}_{\Omega,i}-\mathcal{P}\left(\bm{x}_{\Omega,i},\bm{U}_{\Omega_{i}}\right)\right\Vert ^{2}=0\right\} \]
 and\[
\mathcal{U}_{G,i}=\left\{ \bm{U}\in\mathcal{U}_{m,r}:\; f_{G,i}\left(\bm{U}\right)=1-\lambda_{\max}\left(\bm{B}_{i}^{T}\bm{U}\right)=0\right\} .\]
 Then $\mathcal{U}_{G,i}$ is the closure of $\mathcal{U}_{F,i}$,
i.e., $\mathcal{U}_{G,i}=\overline{\mathcal{U}_{F,i}}$. 
\end{thm}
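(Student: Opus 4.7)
The plan is to reformulate both $\mathcal{U}_{F,i}$ and $\mathcal{U}_{G,i}$ as conditions on the intersection $\mbox{span}(\bm{U})\cap\mbox{span}(\bm{B}_i)$, deduce one inclusion from closedness, and prove the other by an explicit perturbation. The columns of $\bm{B}_i$ are orthonormal, since $\bar{\bm{x}}_{\Omega,i}$ is supported on $\Omega_i$ while each $\bm{e}_{k_j}$ is supported on $\Omega_i^{c}$; hence $\lambda_{\max}(\bm{B}_i^{T}\bm{U})\le 1$, with equality iff $\mbox{span}(\bm{U})$ and $\mbox{span}(\bm{B}_i)$ share a unit vector. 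Unfolding the definitions yields the clean picture: $\bm{U}\in\mathcal{U}_{F,i}$ iff this intersection contains a vector with nonzero $\bar{\bm{x}}_{\Omega,i}$-component, whereas $\bm{U}\in\mathcal{U}_{G,i}$ iff the intersection is merely nontrivial. In particular $\mathcal{U}_{F,i}\subseteq\mathcal{U}_{G,i}$, and since $\lambda_{\max}$ depends continuously on matrix entries, $\mathcal{U}_{G,i}$ is closed and $\overline{\mathcal{U}_{F,i}}\subseteq\mathcal{U}_{G,i}$ comes for free.

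For the reverse, take $\bm{U}^{*}\in\mathcal{U}_{G,i}\setminus\mathcal{U}_{F,i}$; by the reformulation some unit vector $\bm{y}^{*}\in\mbox{span}(\bm{U}^{*})$ is supported on $\Omega_i^{c}$. After right-multiplying $\bm{U}^{*}$ by a suitable $r\times r$ orthogonal matrix $\bm{Q}$---an operation under which both $\mathcal{U}_{F,i}$ and $\mathcal{U}_{G,i}$ are invariant, and which is reversed at the very end---I may assume the first column of $\bm{U}^{*}=[\bm{u}_1^{*},\ldots,\bm{u}_r^{*}]$ equals $\bm{y}^{*}$. Let $\bm{v}$ be the unit normalization of $(\bm{I}-\bm{U}^{*}(\bm{U}^{*})^{T})\bar{\bm{x}}_{\Omega,i}$, so that $\bm{v}\perp\mbox{span}(\bm{U}^{*})$, and define
\[
\bm{U}_\epsilon=\bigl[\sqrt{1-\epsilon^{2}}\,\bm{u}_1^{*}+\epsilon\bm{v},\ \bm{u}_2^{*},\ \ldots,\ \bm{u}_r^{*}\bigr].
\]
The first column has unit norm and is orthogonal to the rest, so $\bm{U}_\epsilon\in\mathcal{U}_{m,r}$, and clearly $\bm{U}_\epsilon\to\bm{U}^{*}$ as $\epsilon\to 0$.

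To close the argument I would show $\bm{U}_\epsilon\in\mathcal{U}_{F,i}$ for every $\epsilon>0$. Decomposing $\bar{\bm{x}}_{\Omega,i}$ in $\mathbb{R}^m$ along $\mbox{span}(\bm{U}^{*})$ and its orthogonal complement, then applying the zero-padding $(\cdot)_{\Omega_i}$ (under which $\bar{\bm{x}}_{\Omega,i}$ is fixed since its support already lies in $\Omega_i$), and using $\langle\bar{\bm{x}}_{\Omega,i},\bm{u}_1^{*}\rangle=\langle\bar{\bm{x}}_{\Omega,i},\bm{y}^{*}\rangle=0$ (disjoint supports) to kill the $j=1$ term, one arrives at
\[
\bar{\bm{x}}_{\Omega,i}=c\,(\bm{v})_{\Omega_i}+\sum_{j=2}^{r}\langle\bar{\bm{x}}_{\Omega,i},\bm{u}_j^{*}\rangle\,(\bm{u}_j^{*})_{\Omega_i}
\]
for some $c>0$. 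Since the first column of $(\bm{U}_\epsilon)_{\Omega_i}$ equals $\epsilon\,(\bm{v})_{\Omega_i}$, the right-hand side exhibits $\bar{\bm{x}}_{\Omega,i}$, and therefore $\bm{x}_{\Omega,i}$, inside $\mbox{span}((\bm{U}_\epsilon)_{\Omega_i})$, giving $f_{F,i}(\bm{U}_\epsilon)=0$. Reversing the right multiplication by $\bm{Q}$ then converts $\{\bm{U}_\epsilon\}$ into an approximating sequence in $\mathcal{U}_{F,i}$ converging to the original $\bm{U}^{*}$.

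The main obstacle is verifying the two non-degeneracies $c>0$ (so that $\bm{v}$ is well defined) and $(\bm{v})_{\Omega_i}\ne\bm{0}$ (so that the perturbed first column genuinely carries the $\bar{\bm{x}}_{\Omega,i}$-direction); both are forced by the standing assumption $\bm{U}^{*}\notin\mathcal{U}_{F,i}$. Indeed, $c=0$ means $\bar{\bm{x}}_{\Omega,i}\in\mbox{span}(\bm{U}^{*})$, which places $\bm{U}^{*}$ in $\mathcal{U}_{F,i}$; while $c>0$ together with $(\bm{v})_{\Omega_i}=\bm{0}$ turns the displayed identity into a realization of $\bar{\bm{x}}_{\Omega,i}$ inside $\mbox{span}((\bm{U}^{*})_{\Omega_i})$, again contradicting $\bm{U}^{*}\notin\mathcal{U}_{F,i}$. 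Once these non-degeneracies are in hand, everything else is a mechanical check of orthonormality, continuity, and invariance under the basis change.
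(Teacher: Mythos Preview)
Your argument is correct and follows the same overall strategy as the paper: characterize membership in $\mathcal{U}_{F,i}$ and $\mathcal{U}_{G,i}$ through the intersection $\mbox{span}(\bm{U})\cap\mbox{span}(\bm{B}_i)$, deduce one inclusion from continuity, and for $\bm{U}^{*}\in\mathcal{U}_{G,i}\setminus\mathcal{U}_{F,i}$ perturb a single column supported on $\Omega_i^{c}$ toward a residual direction. The technical choices differ, however. The paper first passes to the left singular vectors of $\bm{U}_{0}\bm{U}_{0}^{T}\bm{B}_{\bm{x}_{\Omega}}$, proves the dichotomy ``$\bm{U}\in\mathcal{U}_{F}$ iff $\bm{U}_{1:k,\Omega}\ne\bm{0}$'', and then perturbs $\bm{u}_{1}$ by the \emph{Frobenius residual} $\bm{x}_{r}=\bm{x}_{\Omega}-\mathcal{P}(\bm{x}_{\Omega},\bm{U}_{\Omega})$, which is automatically supported on $\Omega$ but must be checked to be orthogonal to $\mbox{span}(\bm{U})$. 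You instead take an arbitrary orthonormal basis with first column $\bm{y}^{*}$ and perturb by the \emph{orthogonal-complement projection} $\bm{v}\propto(\bm{I}-\bm{U}^{*}(\bm{U}^{*})^{T})\bar{\bm{x}}_{\Omega,i}$, which is orthogonal to $\mbox{span}(\bm{U}^{*})$ by construction but requires the check $(\bm{v})_{\Omega_i}\ne\bm{0}$. Your route avoids the SVD machinery entirely and packages the dichotomy into the clean intersection reformulation up front; the paper's route keeps the perturbation vector supported on $\Omega$, which makes the ``$\bm{U}_{\epsilon}\in\mathcal{U}_{F}$'' verification a one-liner. Either perturbation works, and the two non-degeneracy checks you isolate are exactly what is needed.
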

\vspace{0.01in}

The proof is given in Appendix \ref{sub:pf-closure}. Although this
theorem deals with only one column of the observed matrix, the result
can be easily extended to the whole matrix $\bm{X}_{\Omega}$: let
$\mathcal{U}_{F}=\bigcap_{i=1}^{n}\mathcal{U}_{F,i}$ and \begin{align}
\mathcal{U}_{G} & =\bigcap_{i=1}^{n}\mathcal{U}_{G,i}\nonumber \\
 & =\left\{ \bm{U}\in\mathcal{U}_{m,r}:\;\lambda_{\max}\left(\bm{U}^{T}\bm{B}_{i}\right)=1\;\mbox{for all }i\right\} ;\label{eq:def-UG}\end{align}
 then $\mathcal{U}_{G}=\overline{\mathcal{U}_{F}}$.

\emph{Example 1 (Continued):} It can be seen that \[
\bm{B}_{1}=\left[\begin{array}{ccc}
0 & \frac{1}{\sqrt{2}} & \frac{1}{\sqrt{2}}\\
1 & 0 & 0\end{array}\right]^{T}.\]
Hence, \[
f_{G,1}\left(\bm{U}\right)=1-\lambda_{\max}^{2}\left(\left[\begin{array}{c}
\sqrt{2}\epsilon\\
\sqrt{1-2\epsilon^{2}}\end{array}\right]\right)=0.\]
As a result, \begin{align*}
\mathcal{U}_{F,1} & =\left\{ \left[\sqrt{1-2\epsilon^{2}},\epsilon,\epsilon\right]^{T}:\;\epsilon^{2}\le\frac{1}{2}\mbox{ and }\epsilon\ne0\right\} \\
 & \quad\bigcup\left\{ \left[-\sqrt{1-2\epsilon^{2}},\epsilon,\epsilon\right]^{T}:\;\epsilon^{2}\le\frac{1}{2}\mbox{ and }\epsilon\ne0\right\} ,\end{align*}
and \begin{align*}
\mathcal{U}_{G,1} & =\left\{ \left[\sqrt{1-2\epsilon^{2}},\epsilon,\epsilon\right]^{T}:\;\epsilon^{2}\le\frac{1}{2}\right\} \\
 & \quad\bigcup\left\{ \left[-\sqrt{1-2\epsilon^{2}},\epsilon,\epsilon\right]^{T}:\;\epsilon^{2}\le\frac{1}{2}\right\} .\end{align*}
Clearly, $\mathcal{U}_{G,1}=\overline{\mathcal{U}_{F,1}}$.

\vspace{0.03in}

\subsection{\label{sub:Computations-Chordal-Dist}Computations Related to the
Chordal Distance}

For a given performance metric, the computational complexity of the
supporting optimization procedure is an important factor for assessing
its practical value. In this subsection, we show that besides its
continuity, the chordal distance and the related gradient can be computed
efficiently. Hence, all the algorithmic solutions using gradient descent
methods can be easily modified to accommodate the geometric distortion
measure.

The principal angle $\theta_{i}$ and the chordal distance $\sin^{2}\theta_{i}$
can be computed using the singular value decomposition. Given the
$i^{th}$ column of the observed matrix, one can form $\bm{B}_{i}$
easily. Let $\lambda_{i}$ be the largest singular value of the matrix
$\bm{B}_{i}\bm{B}_{i}^{T}\bm{U}$, and let $\bm{b}_{i}$ and $\bm{v}_{i}$
be the corresponding left and right singular vectors, respectively%
\footnote{For convenience, we use the following convention regarding the singular
vectors $\bm{b}_{i}$ and $\bm{v}_{i}$: we let the first nonzero
entry of $\bm{v}_{i}$ be positive; otherwise, we let $\bm{v}_{i}^{\prime}=-\bm{v}_{i}$
and $\bm{b}_{i}^{\prime}=-\bm{b}_{i}$, and use $\bm{v}_{i}^{\prime}$
and $\bm{b}_{i}^{\prime}$ for singular value decomposition. The simultaneous
changes in signs do not affect the singular value decomposition nor
the computation of the gradient. %
}. Following the definition of the chordal distance, one has $f_{G,i}\left(\bm{U}\right)=\sin^{2}\theta_{i}=1-\lambda_{i}^{2}$.
Let $\bm{G}_{i}\in\mathbb{R}^{m\times r}$ be a matrix such that \[
\left(\bm{G}_{i}\right)_{k,\ell}=\frac{\partial}{\partial\bm{U}_{k,\ell}}f_{G,i}\left(\bm{U}\right)=-2\cos\theta_{i}\frac{\partial\cos\theta_{i}}{\partial\bm{U}_{k,\ell}}.\]
 It can be verified that \begin{equation}
\bm{G}_{i}=-2\lambda\bm{b}_{i}\bm{v}_{i}^{T}.\label{eq:Graident}\end{equation}
 Note that in the matrix completion problem, one only needs to search
for a column space $\mbox{span}\left(\bm{U}\right)$ consistent with
the observations. Taking this fact into consideration, we have \cite{edelman_optimization_manifolds_1998}
\begin{equation}
\nabla_{\bm{U}}f_{G}=\sum_{i=1}^{n}\nabla_{\bm{U}}f_{G,i}=\left(\bm{I}-\bm{U}\bm{U}^{T}\right)\sum_{i=1}^{n}\bm{G}_{i}.\label{eq:Gradient-G}\end{equation}

Switching from the Frobenius norm to the chordal distance does not
introduce extra computational cost. Due to the particular structure
of $\bm{B}_{i}$, the matrix multiplication $\bm{B}_{i}\bm{B}_{i}^{T}\bm{U}$
can be executed in $O\left(mr\right)$ steps. The resulting matrix
has dimensions $m\times r$, where typically $r\ll m$. The major
computational burden is incurred by the singular value decomposition.
Computing the largest singular value and the corresponding singular
vectors of an $m\times r$ matrix essentially reduces to computing
the largest eigenvalue of an $r\times r$ matrix and the corresponding
eigenvector. Hence, the overall complexity of computing $f_{G,i}$
is $O\left(mr^{2}+r^{3}\right)=O\left(mr^{2}\right)$, where the $O\left(mr^{2}\right)$
and $O\left(r^{3}\right)$ terms come from matrix multiplication and
eigenvalue computation, respectively. In comparison, to solve the
least square problem in the definition of $f_{F,i}$ has a $O\left(mr^{2}\right)$
cost as well.

\section{\label{sec:Performance-Guarantee}Performance Guarantees}

Consider the matrix completion problem described in (\ref{eq:matrix-completion-chordal-dist}).
The following theorem describes completion scenarios for which a global
optimum can be found with probability one.

\vspace{0.03in}

\begin{thm}
\label{thm:guarantee}Consider the following cases: 
\begin{enumerate}
\item \emph{(rank-one matrices with arbitrary sampling)}: Let $\bm{X}_{\Omega}=\mathcal{P}_{\Omega}\left(\bm{X}\right)$
for some unknown matrix $\bm{X}$ with rank equal to one. Here, $\Omega\subset\left[m\right]\times\left[n\right]$
can be arbitrary. 
\item \emph{(full sampling with arbitrary rank matrices)}: Let $\bm{X}_{\Omega}=\bm{X}$,
i.e., $\Omega=\left[m\right]\times\left[n\right]$. 
\end{enumerate}
Suppose that $r=\mbox{rank}\left(\bm{X}\right)$ is given. Let $\mathcal{U}_{G}\subset\mathcal{U}_{m,r}$
be the preimage of $f_{G}\left(\bm{U}\right)=0$ (also defined in
(\ref{eq:def-UG})). Let $\bm{U}_{0}$ be randomly generated from
the isotropic distribution on $\mathcal{U}_{m,r}$, and used as the
initial point of the search procedure. With probability one, there
exists a continuous path $\bm{U}\left(t\right)$, $t\in\left[0,1\right]$,
such that $\bm{U}\left(0\right)=\bm{U}_{0}$, $\bm{U}\left(1\right)\in\mathcal{U}_{G}$
and $\frac{d}{dt}f_{G}\le0$ for all $t\in\left(0,1\right)$, where
the equality holds if and only if $\bm{U}_{0}\in\mathcal{U}_{G}$.
\end{thm}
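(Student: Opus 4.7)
The plan is to construct, for each starting point $\bm{U}_0$, an explicit geodesic in $\mathcal{G}_{m,r}$ from $\mathrm{span}(\bm{U}_0)$ to a carefully chosen target $\mathrm{span}(\bm{U}^*) \in \mathcal{U}_G$ via Lemma 1, and to verify directly that $f_G$ is monotonically non-increasing along this geodesic. If $\bm{U}_0 \in \mathcal{U}_G$ we take the constant path. Otherwise, since the isotropic distribution on $\mathcal{G}_{m,r}$ has no mass points, with probability one all principal angles between $\mathrm{span}(\bm{U}_0)$ and $\mathrm{span}(\bm{U}^*)$ lie strictly in $(0,\pi/2)$, and no other codimension-$\ge 1$ exceptional event occurs. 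The geodesic formula of Lemma 1 then produces a smooth path $\bm{U}(t)$ with $\bm{U}(0)=\bm{U}_0$ and $\mathrm{span}(\bm{U}(1))=\mathrm{span}(\bm{U}^*)$.

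For Case 2 (full sampling, arbitrary rank) I would take $\bm{U}^*$ to be any orthonormal basis for $\mathrm{col}(\bm{X})$. With $\bm{U}_1=\bm{U}_0$, $\bm{U}_2=\bm{U}^*$, and $\bm{U}_1^T\bm{U}_2=\bm{V}_1\bm{\Lambda}\bm{V}_2^T$, the identities $\bm{U}_2^T\bar{\bm{U}}_1=\bm{V}_2\bm{\Lambda}$ and $\bm{U}_2^T\bm{G}=\bm{V}_2\,\mathrm{diag}(\sin\alpha_i)$, which follow from the explicit construction of $\bm{G}$ in Lemma 1, lead after a short computation to
\[
\bm{U}^{*T}\bm{U}(t) \;=\; \bm{V}_2\,\mathrm{diag}\bigl(\cos(\alpha_i(1-t))\bigr)\,\bm{V}_1^T.
\]
Writing each normalised observed column as $\bar{\bm{x}}_i=\bm{U}^*\bm{a}_i$ and setting $A=\sum_i\bm{a}_i\bm{a}_i^T$ (which is positive definite because $\mathrm{rank}(\bm{X})=r$), one obtains the closed form
\[
f_G(\bm{U}(t)) \;=\; n\,-\,\sum_{i=1}^{r}\cos^2(\alpha_i(1-t))\,(\bm{V}_2^T A \bm{V}_2)_{ii}.
\]
Each coefficient $(\bm{V}_2^T A \bm{V}_2)_{ii}$ is strictly positive, each $\cos^2(\alpha_i(1-t))$ is non-decreasing in $t$, and part 2 of Lemma 1 guarantees that at least one $\alpha_i>0$ whenever $\bm{U}_0\notin\mathcal{U}_G$, giving $\frac{d}{dt}f_G\le 0$ with strict inequality in that case.

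For Case 1 (rank one, arbitrary sampling) $r=1$ and the geodesic has a single principal angle $\alpha$, so it reduces to $\bm{U}(t)=\cos(\alpha(1-t))\bm{U}^*+\sin(\alpha(1-t))\bm{U}_0^\perp$, where $\bm{U}^*$ is the unit vector generating $\mathrm{col}(\bm{X})$ and $\bm{U}_0^\perp$ is the unit component of $\bm{U}_0$ perpendicular to $\bm{U}^*$. Because $\bm{U}^*_{\Omega_i}$ is a scalar multiple of $\bar{\bm{x}}_{\Omega,i}$ for every observed column, the mixed terms in $f_{G,i}(\bm{U}(t))=\|\bm{U}(t)_{\Omega_i}\|^2-(\bar{\bm{x}}_{\Omega,i}^T\bm{U}(t)_{\Omega_i})^2$ collapse and one obtains
\[
f_{G,i}(\bm{U}(t)) \;=\; \sin^2(\alpha(1-t))\,\bigl(\|(\bm{U}_0^\perp)_{\Omega_i}\|^2-(\bar{\bm{x}}_{\Omega,i}^T(\bm{U}_0^\perp)_{\Omega_i})^2\bigr),
\]
in which the bracketed factor is non-negative by Cauchy--Schwarz and $\sin^2(\alpha(1-t))$ is non-increasing in $t$. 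For the strict-decrease conclusion, the simultaneous vanishing of every bracket would force $\bm{U}_0^\perp\in\mathcal{U}_G$, which is a probability-zero event under the isotropic law on $\bm{U}_0$.

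The principal technical obstacle is establishing the $t$-independent SVD factorisation of $\bm{U}^{*T}\bm{U}(t)$ in Case 2; this requires careful bookkeeping with the matrices $\bar{\bm{U}}_1,\bar{\bm{U}}_2,\bm{G},\bm{V}_1,\bm{V}_2$ from Lemma 1, and once it is in hand the monotonicity follows from elementary calculus on the $\cos^2$ functions. The remaining work is then the verification of the two null events ($\bm{U}_0\in\mathcal{U}_G$ and $\bm{U}_0^\perp\in\mathcal{U}_G$) needed to make the strict-inequality clause precise, which follows from invariance and the absence of mass points of the isotropic measure.
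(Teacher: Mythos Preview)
Your approach is correct and in fact somewhat cleaner than the paper's. The paper proves the two cases separately (Theorems~\ref{thm:guarantee-rank-one} and~\ref{thm:guarantee-full-sampling}), but only establishes $\left.\frac{d}{dt}\right|_{t=0}\sin^2\theta_i\le 0$ at the initial point; it then appeals to a continuation argument (Assumptions I and II persist along the path, so the $t=0$ analysis can be re-applied at every $t$) to extend monotonicity to all of $[0,1]$. You bypass this by deriving global closed forms: in Case~2 the identity $\bm{U}^{*T}\bm{U}(t)=\bm{V}_2\,\mathrm{diag}(\cos(\alpha_i(1-t)))\,\bm{V}_1^T$ yields $f_G(\bm{U}(t))=n-\sum_i\cos^2(\alpha_i(1-t))(\bm{V}_2^TA\bm{V}_2)_{ii}$ directly, and in Case~1 the factorisation $f_{G,i}(\bm{U}(t))=\sin^2(\alpha(1-t))\cdot f_{G,i}(\bm{U}_0^\perp)$ makes the monotonicity manifest for all $t$ at once. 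For Case~1 the paper uses the normalised convex combination $\bm{u}(t)=\frac{(1-t)\bm{u}_0+t\bm{u}_X}{\|\cdot\|}$, which is just a reparametrisation of your geodesic, so the paths agree geometrically.

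One small correction to your Case~1 argument: the claim that ``$\bm{U}_0^\perp\in\mathcal{U}_G$ is a probability-zero event'' is not needed and need not be true (for instance, if every column has at most one observed entry then $\mathcal{U}_G=\mathcal{U}_{m,1}$ and the event has probability one). The right observation is that your own formula already gives $f_G(\bm{U}_0)=\sin^2\alpha\cdot C$ with $C=\sum_i f_{G,i}(\bm{U}_0^\perp)$; hence if $\bm{U}_0\notin\mathcal{U}_G$ then $f_G(\bm{U}_0)>0$, forcing both $\alpha>0$ and $C>0$, which is exactly what is required for strict decrease on $(0,1)$. No additional null-set argument is needed.
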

\vspace{0.03in}

The proof of the theorem is outlined in Section \ref{sub:pf-guarantee}.
It is worth to note that almost all starting points are good: it is
certainly good if the starting point is a consistent solution; otherwise,
there exists a continuous path from this starting point to a global
optimum such that the objective function keeps decreasing. The performance
guarantee provided in Theorem \ref{thm:guarantee} is strong in the
sense that it does not require either incoherence conditions or large
matrix sizes.

A simple corollay of the Theorem \ref{thm:guarantee} is the following
result: suppose that the partial observations $\bm{X}_{\Omega}$ admit
a unique consistent solution in terms of the Frobenius norm; then
a gradient search procedure using the geometric norm finds this unique
solution with probability one. This conclusion follows from the fact
that the solution set under the Frobenius norm contains only a single
point and therefore $\mathcal{U}_{G}=\overline{\mathcal{U}_{F}}=\mathcal{U}_{F}$. 

For the more general case where $r>1$ and $\Omega\ne\left[m\right]\times\left[n\right]$,
we can not prove the same performance guarantees. Nevertheless, in
Section \ref{sub:other-cases}, we present a collection of results
that may be helpful for future exploration.

\subsection{\label{sub:pf-guarantee}Proof of Theorem \ref{thm:guarantee}}

For our proof techniques, we need the following two assumptions.

\vspace{0.03in}

\textbf{\emph{Assumption I}}: There exists a global optimum $\bm{U}_{X}\in\mathcal{U}_{m,r}$
such that $f_{G}\left(\bm{U}_{\bm{X}}\right)=0$ and all the $r$
principal angles between $\mbox{span}\left(\bm{U}_{X}\right)$ and
$\mbox{span}\left(\bm{U}_{0}\right)$ are less than $\pi/2$. That
is, all the singular values of $\bm{U}_{X}^{T}\bm{U}_{0}$ are strictly
positive.

\textbf{\emph{Assumption II}}: All of the $\theta_{i}$'s (the smallest
principal angle between $\mbox{span}\left(\bm{U}_{0}\right)$ and
$\mbox{span}\left(\bm{B}_{i}\right)$) are less than $\pi/2$.

\vspace{0.03in}
 
\begin{remrk}
\begin{flushleft}
Suppose that the matrix $\bm{U}_{0}$ is randomly drawn from the uniform
(isotropic) distribution on $\mathcal{U}_{m,r}$. Then $\bm{U}_{0}$
satisfies both assumptions with probability one. This result can be
easily verified using the probability density function of the principal
angles \cite{James_54_Normal_Multivariate_Analysis_Orthogonal_Group,Adler_2004_Integrals_Grassmann,Dai2008_small_ball_quantization_GM,Dai_Globecom07_large_balls}. 
\par\end{flushleft}
\end{remrk}
\vspace{0.03in}

Assuming that these two assumptions are satisfied, we have the following
two theorems corresponding to the two cases in Theorem \ref{thm:guarantee},
respectively. 
\begin{thm}
\emph{\label{thm:guarantee-rank-one}(Rank-One Case)} Let $\bm{X}_{\Omega}$
be the partial observation matrix generated from a rank-one matrix.
Let $\bm{u}_{0}\in\mathcal{U}_{m,1}$ be an estimate of the column
space that satisfies Assumptions I and II. Suppose that $\sum_{i=1}^{n}\sin^{2}\theta_{i}\ne0$.
Then there exists a continuous path $\bm{u}\left(t\right)\in\mathcal{U}_{m,r}$
such that $\bm{u}\left(0\right)=\bm{u}_{0}$, $\bm{u}\left(1\right)\in\mathcal{U}_{G}$,
and $\left.\frac{d}{dt}\right|_{t=0}\sin^{2}\theta_{i}\le0$ for all
$i\in\left[n\right]$, where equality holds if and only if $\theta_{i}\left(0\right)=0$. 
\end{thm}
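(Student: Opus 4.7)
The plan is to let $\bm{u}(t)$ be the unique geodesic in $\mathcal{G}_{m,1}$ connecting $\mathrm{span}(\bm{u}_0)$ to $\mathrm{span}(\bm{u}_X)$, where $\bm{u}_X$ is the global optimum supplied by Assumption I. After fixing the sign of $\bm{u}_X$ so that $\lambda := \bm{u}_0^T \bm{u}_X > 0$ (permitted because the span is sign-invariant and because Assumption I forces $\lambda \neq 0$), specializing Lemma~\ref{lem:Grassmann-Manifold} to $r=1$ yields $\bm{u}(t) = \cos(\alpha t)\bm{u}_0 + \sin(\alpha t)\bm{g}$, where $\alpha = \arccos\lambda$ and $\bm{g} = (\bm{u}_X - \lambda \bm{u}_0)/\sqrt{1-\lambda^2}$. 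Then $\bm{u}(0) = \bm{u}_0$ and $\bm{u}(1) = \bm{u}_X \in \mathcal{U}_G$ automatically, and since $\bm{u}_0 \ne \bm{u}_X$ (otherwise $f_G(\bm{u}_0)=0$, contradicting $\sum_i \sin^2\theta_i \neq 0$) we also have $\alpha \in (0,\pi/2)$. All that remains is to analyze $\frac{d}{dt}\bigl|_{t=0} \sin^2\theta_i$.

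The crux of the argument is a structural fact that holds precisely because $\bm{X}$ has rank one: for every $i$ one has $\mathrm{span}(\bm{u}_X) \subseteq \mathrm{span}(\bm{B}_i)$. Indeed, the $i$-th column of $\bm{X}$ is a scalar multiple of $\bm{u}_X$ that restricts to $\bm{x}_{\Omega,i}$ on $\Omega_i$; every such vector lies in the span of $\bar{\bm{x}}_{\Omega,i}$ together with the standard basis vectors $\{\bm{e}_k : k \in \Omega_i^c\}$, which by definition~\eqref{eq:def-Bi} is $\mathrm{span}(\bm{B}_i)$. Writing the orthogonal projector $P_i = \bm{B}_i \bm{B}_i^T$, this containment is equivalent to $P_i \bm{u}_X = \bm{u}_X$.

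I would then compute the derivative directly. Writing $\cos^2\theta_i(t) = \bm{u}(t)^T P_i \bm{u}(t)$ and using $\dot{\bm{u}}(0) = \alpha \bm{g}$ gives $\frac{d}{dt}\bigl|_{t=0}\cos^2\theta_i = 2\alpha\, \bm{g}^T P_i \bm{u}_0$. Substituting the expression for $\bm{g}$ and applying $P_i \bm{u}_X = \bm{u}_X$, this collapses to
\[
\bm{g}^T P_i \bm{u}_0 \;=\; \frac{\bm{u}_X^T \bm{u}_0 - \lambda \|P_i \bm{u}_0\|^2}{\sqrt{1-\lambda^2}} \;=\; \frac{\lambda\,\sin^2\theta_i(0)}{\sqrt{1-\lambda^2}},
\]
so that
\[
\left.\frac{d}{dt}\right|_{t=0} \sin^2\theta_i \;=\; -\frac{2\alpha\lambda}{\sqrt{1-\lambda^2}}\,\sin^2\theta_i(0) \;\le\; 0,
\]
with equality if and only if $\sin^2\theta_i(0) = 0$, i.e.\ $\theta_i(0) = 0$. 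The positivity of the prefactor uses $\alpha > 0$, $\lambda > 0$, and $\lambda < 1$, all of which are already in hand.

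The only real obstacle is the structural input, not the calculation: Assumption I is indispensable because the strict positivity $\lambda > 0$ guarantees that the descent direction $\bm{g}$ has a genuine component that decreases every $\sin^2\theta_i$ simultaneously. The rank-one hypothesis is essential because it forces $\bm{u}_X$ to lie in \emph{all} of the subspaces $\mathrm{span}(\bm{B}_i)$ at once, so a single geodesic towards $\bm{u}_X$ works uniformly for every atomic objective. In higher rank this simultaneous containment fails, which is consistent with the fact that the same argument does not extend beyond the two cases of Theorem~\ref{thm:guarantee}.
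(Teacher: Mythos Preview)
Your proof is correct and follows essentially the same approach as the paper: both construct a path on the unit sphere from $\bm{u}_0$ to the global optimum $\bm{u}_X$ and exploit the rank-one structural fact $P_i\bm{u}_X=\bm{u}_X$ to show each $\cos^2\theta_i$ is increasing. The only cosmetic difference is the parametrization---the paper uses the normalized linear interpolation $\bm{u}(t)=\bigl((1-t)\bm{u}_0+t\bm{u}_X\bigr)/\left\Vert(1-t)\bm{u}_0+t\bm{u}_X\right\Vert$ rather than the constant-speed geodesic, but these trace the same great-circle arc and the resulting derivative computations are equivalent.
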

\vspace{0.03in}
 
\begin{thm}
\emph{\label{thm:guarantee-full-sampling}(Full-Sampling Case)} Let
$\bm{X}\in\mathbb{R}^{m\times n}$ be a rank-$r$ matrix. Let $\bm{U}_{0}\in\mathcal{U}_{m,r}$
satisfy Assumptions I and II. Suppose that $\sum_{i=1}^{n}\sin^{2}\theta_{i}\ne0$.
Then there exists a $\bm{U}\left(t\right)\in\mathcal{U}_{m,r}$ such
that $\bm{U}\left(0\right)=\bm{U}_{0}$, $\bm{U}\left(1\right)\in\mathcal{U}_{G}$
and $\left.\frac{d}{dt}\right|_{t=0}\sin^{2}\theta_{i}\le0$ for all
$i\in\left[n\right]$, where equality holds if and only if $\theta_{i}\left(0\right)=0$. 
\end{thm}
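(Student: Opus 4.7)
The plan is to take $\bm{U}_X\in\mathcal{U}_{m,r}$ to be any orthonormal basis for $\mbox{span}(\bm{X})$ and then show that the Grassmann geodesic of Lemma~\ref{lem:Grassmann-Manifold} from $\bm{U}_0$ to $\bm{U}_X$ serves as the required path. Since $\Omega=[m]\times[n]$, the matrix $\bm{B}_i$ defined in (\ref{eq:def-Bi}) reduces to the single unit column $\bar{\bm{x}}_{\Omega,i}$, so $\sin^2\theta_i(\bm{U})=1-\bar{\bm{x}}_{\Omega,i}^T\bm{U}\bm{U}^T\bar{\bm{x}}_{\Omega,i}$. Because $\mbox{rank}(\bm{X})=r$, every $\bar{\bm{x}}_{\Omega,i}$ lies in $\mbox{span}(\bm{U}_X)$, so $\bm{U}_X\in\mathcal{U}_G$. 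Assumption~I guarantees that all principal angles $\alpha_j=\arccos\lambda_j$ between $\mbox{span}(\bm{U}_0)$ and $\mbox{span}(\bm{U}_X)$ lie in $[0,\pi/2)$, and in particular $\lambda_j>0$. Lemma~\ref{lem:Grassmann-Manifold}(1) then produces a well-defined geodesic $\bm{U}(t)$ with $\bm{U}(0)=\bm{U}_0$ and $\mbox{span}(\bm{U}(1))=\mbox{span}(\bm{U}_X)\in\mathcal{U}_G$, whose initial velocity is $\dot{\bm{U}}(0)=\bm{G}\,\mbox{diag}(\alpha_j)\,\bm{V}_1^T$.

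The main step is to compute $\left.\frac{d}{dt}\sin^2\theta_i\right|_{t=0}=-2\,\bar{\bm{x}}_{\Omega,i}^T\,\dot{\bm{U}}(0)\,\bm{U}_0^T\,\bar{\bm{x}}_{\Omega,i}$ in a form whose sign is manifestly nonpositive. From the proof of Lemma~\ref{lem:Grassmann-Manifold} one reads off the clean identity $\bar{\bm{U}}_X=\bar{\bm{U}}_0\,\mbox{diag}(\lambda_j)+\bm{G}\,\mbox{diag}(\sqrt{1-\lambda_j^2})$ together with the orthogonality $\bar{\bm{U}}_0^T\bm{G}=\bm{0}$ and $\bm{g}_j^T\bm{g}_k=\delta_{jk}$ for $\lambda_j,\lambda_k<1$. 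Writing $\bar{\bm{x}}_{\Omega,i}=\bar{\bm{U}}_X\bm{w}_i$ with $\|\bm{w}_i\|=1$ and substituting gives $\bar{\bm{U}}_0^T\bar{\bm{x}}_{\Omega,i}=\mbox{diag}(\lambda_j)\bm{w}_i$ and $\bm{G}^T\bar{\bm{x}}_{\Omega,i}=\mbox{diag}(\sqrt{1-\lambda_j^2})\bm{w}_i$, so
\[
\left.\frac{d}{dt}\sin^2\theta_i\right|_{t=0}=-2\sum_{j=1}^{r}\alpha_j\,\lambda_j\sqrt{1-\lambda_j^2}\,w_{i,j}^2.
\]
By Assumption~I, $\lambda_j>0$ for every $j$, so every summand is nonnegative and the derivative is $\le 0$. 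Equality forces $w_{i,j}=0$ whenever $\lambda_j<1$; since $\bar{\bm{U}}_{X,:j}=\bar{\bm{U}}_{0,:j}$ when $\lambda_j=1$, this is precisely the statement $\bar{\bm{x}}_{\Omega,i}\in\mbox{span}(\bm{U}_0)$, i.e., $\theta_i(0)=0$.

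The main obstacle is organizing the algebraic reduction so that every summand in the derivative is manifestly nonnegative; once the explicit geodesic of Lemma~\ref{lem:Grassmann-Manifold} and the orthogonality $\bar{\bm{U}}_0^T\bm{G}=\bm{0}$ are in hand, the computation collapses. Assumption~I is used essentially here: without $\lambda_j>0$, contributions from distinct principal-angle directions could cancel in sign. The standing hypothesis $\sum_i\sin^2\theta_i\ne 0$ ensures $\bm{U}_0\notin\mathcal{U}_G$, so at least one $w_{i,j}$ with $\lambda_j<1$ is nonzero, making the inequality strict for some $i$ and yielding genuine descent along the constructed path.
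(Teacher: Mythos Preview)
Your proposal is correct and follows essentially the same route as the paper: construct the Grassmann geodesic from $\bm{U}_0$ to a column-space basis $\bm{U}_X$ via Lemma~\ref{lem:Grassmann-Manifold}, write $\bar{\bm{x}}_{\Omega,i}=\bar{\bm{U}}_X\bm{w}_i$, and reduce $\left.\frac{d}{dt}\sin^2\theta_i\right|_{t=0}$ to the sum $-2\sum_{j} \alpha_j\cos\alpha_j\sin\alpha_j\, w_{i,j}^2$, which matches the paper's expression $\sum_j \bar w_j^{2}\alpha_j\sin\alpha_j\cos\alpha_j$ for $\left.\frac{d}{dt}\cos\theta_i\right|_{t=0}$. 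One small correction of exposition: each summand is nonnegative regardless of Assumption~I (all factors $\alpha_j,\lambda_j,\sqrt{1-\lambda_j^2},w_{i,j}^2$ are nonnegative), so there is no ``sign cancellation'' to worry about; the hypothesis $\lambda_j>0$ is used only in the equality analysis, ensuring that a vanishing summand with $\lambda_j<1$ forces $w_{i,j}=0$.
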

\vspace{0.03in}

The proofs of Theorem \ref{thm:guarantee-rank-one} and \ref{thm:guarantee-full-sampling}
are given in Appendix \ref{sub:pf-rank-1} and \ref{sub:pf-full-sampling},
respectively. Since the proof techniques differ significantly, we
present the two theorems/proofs separately.

Both theorems are stated for derivatives taken at $t=0$. Nevertheless,
the analysis can be extended for arbitrary $t\in\left[0,1\right]$,
that is, $\frac{d}{dt}\sin^{2}\theta_{i}\le0$ for all $t\in\left[0,1\right]$,
where the equality holds if and only if $\theta_{i}\left(t\right)=0$.
To show that this is the case, note that in proving both Theorem \ref{thm:guarantee-rank-one}
and Theorem \ref{thm:guarantee-full-sampling}, we constructed a continuous
path $\bm{U}\left(t\right)$ such that $\bm{U}\left(0\right)=\bm{U}_{0}$
and $\bm{U}\left(1\right)\in\mathcal{U}_{G}$. By fixing this continuous
path, we observe that: 
\begin{enumerate}
\item All the $r$ principal angles between $\mbox{span}\left(\bm{U}_{0}\right)$
and $\mbox{span}\left(\bm{U}\left(1\right)\right)$ are monotonically
decreasing as $t$ increases to one. This implies that Assumption
I holds for all $t\in\left[0,1\right]$. 
\item We have $\theta_{i}\left(t\right)<\pi/2$ for all $i\in\left[n\right]$
and for all $t\in\left[0,\epsilon\right)$ for some sufficiently small
$\epsilon>0$. This claim can be verified by invoking the facts that
$\theta_{i}\left(0\right)<\pi/2$ for all $i\in\left[n\right]$ and
that $\theta_{i}$ is a continuous functions for all $i\in\left[n\right]$.
As a result, all $\bm{U}\left(t\right)$'s, where $t\in\left[0,\epsilon\right)$,
satisfy Assumptions I and II. 
\item For every $t$ in the interval $\left[0,\epsilon\right)$, $\bm{U}\left(t\right)$
is the starting point of the geodesic path from $\bm{U}\left(t\right)$
to $\bm{U}\left(1\right)$, which is a part of the geodesic path from
$\bm{U}\left(0\right)$ to $\bm{U}\left(1\right)$. Using the same
proof techniques as in Appendix \ref{sub:pf-rank-1} and \ref{sub:pf-full-sampling},
it is clear that $\frac{d}{dt}\sin^{2}\theta_{i}\left(t\right)\le0$
for all $t\in\left[0,\epsilon\right)$. Hence, $\theta_{i}\left(t\right)\le\theta_{i}\left(0\right)<\frac{\pi}{2}$
for all $i\in\left[n\right]$ and for all $t\in\left[0,\epsilon\right)$. 
\item The arguments above can be extended. It can be verified that $\theta_{i}\left(t\right)\le\theta_{i}\left(0\right)<\pi/2$
for all $i\in\left[n\right]$ and for all $t\in\left[0,1\right]$.
This implies that $\bm{U}\left(t\right)$ satisfies Assumptions I
and II for all $t\in\left[0,1\right]$. Hence, $\frac{d}{dt}\sin^{2}\theta_{i}\left(t\right)\le0$
for all $i\in\left[n\right]$ and all $t\in\left[0,1\right]$, where
the equality holds if and only if $\theta_{i}\left(t\right)=0$. Theorem
\ref{thm:guarantee} therefore holds.
\end{enumerate}
\vspace{0.03in}

A direct consequence of Theorem \ref{thm:guarantee} is that for almost
all $\bm{U}_{0}\in\mathcal{U}_{m,r}$, there exists a continuous path
leading to a global minimizer. However, one does not know this path
in the process of solving the matrix completion problem. A practical
approach is to use a gradient descent method. We consider the following
randomized gradient descent algorithm. Let $\bm{U}^{\left(i\right)}\in\mathcal{U}_{m,r}$,
$i=1,2,\cdots$, be the starting point of the $i^{th}$ iteration.
Clearly, $\bm{U}^{\left(i\right)}$, $i\ge2$, is also the end point
of the $\left(i-1\right)^{th}$ iteration. We generate the sequence
of $\bm{U}^{\left(i\right)}$'s in the following manner. 
\begin{enumerate}
\item Let $\bm{U}^{\left(1\right)}$ be randomly generated from the isotropic
distribution. 
\item Set $i=1$. Execute the following iterative process.

\begin{enumerate}
\item Compute the gradient $\nabla_{\bm{U}^{\left(i\right)}}f_{G}$. 
\item Let $\bm{U}^{\left(i\right)}\left(t\right)$ be the geodesic curve
starting at $\bm{U}^{\left(i\right)}\left(0\right)=\bm{U}^{\left(i\right)}$
with direction $\bm{H}=-\nabla_{\bm{U}^{\left(i\right)}}f_{G}$. 
\item Let $t^{\left(i\right)*}$ be such that $\frac{d}{dt}f_{G}\left(t^{\left(i\right)*}\right)=0$
and $\frac{d}{dt}f_{G}\left(t\right)<0$ for all $t<t^{\left(i\right)*}$. 
\item Randomly generate a $t^{\left(i\right)}$ from the uniform distribution
on $\left(0,t^{\left(i\right)*}\right)$. 
\item Let $\bm{U}^{\left(i+1\right)}=\bm{U}^{\left(i\right)}\left(t^{\left(i\right)}\right)$.
Let $i=i+1$. Go to Step (a). 
\end{enumerate}
\end{enumerate}
Due to the randomness of $\bm{U}^{\left(i\right)}$, all $\bm{U}^{\left(i\right)}$'s
satisfy Assumptions I and II with probability one. The objective function
decreases after each iteration. This gradient descent procedure converges
to a global minimum as the number of iterations approachs infinity. 
\begin{remrk}
Denote the obtained global minimum by $\hat{\bm{U}}$. It may happen
that $\hat{\bm{U}}\in\mathcal{U}_{G}\backslash\mathcal{U}_{F}$. In
this case, the solution is inconsistent with respect to to the standard
Frobenius norm. One can use perturbation techniques to move $\hat{\bm{U}}$
from the boundary of $\mathcal{U}_{F}$ to the interior region of
$\mathcal{U}_{F}$. 
\end{remrk}

\subsection{\label{sub:other-cases}The General Framework}

For the cases that are not described in Theorem \ref{thm:guarantee},
we have the following corollary. 
\begin{cor}
\emph{\label{cor:General-Cases}(General Cases)} Let $\bm{X}\in\mathbb{R}^{m\times n}$
be a rank-$r$ matrix. Let $\bm{U}_{X}\in\mathcal{U}_{G}$ be a global
minimum. For each $i\in\left[n\right]$, the following statements
are true. Let $\bm{u}_{X,i}\in\mbox{span}\left(\bm{U}_{X}\right)\bigcap\mbox{span}\left(\bm{B}_{i}\right)$
be a unit norm vector. Let $\bm{U}_{0}\in\mathcal{U}_{m,r}$ and $\bm{w}_{i}\in\mathcal{U}_{r,1}$
be randomly drawn from the corresponding isotropic distributions respectively.
Then with probability one, the vector $\bm{u}_{0,i}\triangleq\bm{U}_{0}\bm{w}_{i}$
is not orthogonal to $\bm{u}_{X,i}$. Suppose that this is true. Define
$\theta_{i}=\cos^{-1}\left\Vert \mathcal{P}\left(\bm{u}_{i}\left(t\right),\bm{B}_{i}\right)\right\Vert _{2}$.
There exists a continuous path $\bm{u}_{i}\left(t\right)\in\mathcal{U}_{m,1}$
such that $\bm{u}_{i}\left(0\right)=\bm{u}_{0,i}$, $\bm{u}_{i}\left(1\right)\in\mbox{span}\left(\bm{U}_{X,i}\right)\bigcap\mathcal{U}_{m,1}$,
and $\frac{d}{dt}\sin^{2}\theta_{i}\le0$, where the equality holds
if and only if $\theta_{i}\left(t\right)=0$.\end{cor}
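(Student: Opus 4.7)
The plan is to view the corollary as a single-column reduction: having fixed a global minimum $\bm{U}_X$, for each $i$ the task is to rotate the unit vector $\bm{u}_{0,i}=\bm{U}_0\bm{w}_i$ along a path in $\mathcal{U}_{m,1}$ until it coincides with the specific target $\bm{u}_{X,i}\in\mathrm{span}(\bm{U}_X)\cap\mathrm{span}(\bm{B}_i)$. Because the path lives in one-dimensional subspaces at each time, the problem collapses to a rank-one Grassmann geometry that can be treated using Lemma~\ref{lem:Grassmann-Manifold} specialized to $r=1$.

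\emph{Genericity.} Since $\bm{U}_0$ is isotropically distributed on $\mathcal{U}_{m,r}$, for any orthogonal $\bm{O}\in\mathcal{U}_{m,m}$ the product $\bm{O}\bm{U}_0\bm{w}_i$ has the same distribution as $\bm{U}_0\bm{w}_i$, so $\bm{u}_{0,i}$ is isotropically distributed on the unit sphere of $\mathbb{R}^m$. The set of unit vectors orthogonal to the fixed vector $\bm{u}_{X,i}$ has measure zero, hence $\cos\alpha:=\langle\bm{u}_{0,i},\bm{u}_{X,i}\rangle\neq 0$ almost surely; after replacing $\bm{u}_{X,i}$ by its negative if necessary I can assume $\alpha\in[0,\pi/2)$. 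Applying Lemma~\ref{lem:Grassmann-Manifold} then yields a geodesic in $\mathcal{G}_{m,1}$ that lifts to a path $\bm{u}_i(t)\in\mathcal{U}_{m,1}$ lying entirely in the 2D plane $\mathrm{span}(\bm{u}_{0,i},\bm{u}_{X,i})$; in the orthonormal basis $\{\bm{u}_{X,i},\bm{u}_{X,i}^{\perp}\}$ of that plane it takes the closed form $\bm{u}_i(t)=\cos(\alpha(1-t))\,\bm{u}_{X,i}+\sin(\alpha(1-t))\,\bm{u}_{X,i}^{\perp}$.

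\emph{Monotonicity.} The columns of $\bm{B}_i$ in (\ref{eq:def-Bi}) are orthonormal, because $\bar{\bm{x}}_{\Omega,i}$ is supported on $\Omega_i$ while each $\bm{e}_{k_j}$ is supported on $\Omega_i^c$; hence $\bm{B}_i\bm{B}_i^T$ is the orthogonal projector onto $\mathrm{span}(\bm{B}_i)$ and $\bm{u}_{X,i}\in\mathrm{span}(\bm{B}_i)$ gives $\bm{B}_i\bm{B}_i^T\bm{u}_{X,i}=\bm{u}_{X,i}$. Expanding $\cos^2\theta_i(t)=\|\bm{B}_i\bm{B}_i^T\bm{u}_i(t)\|^2$ in the basis $\{\bm{u}_{X,i},\bm{u}_{X,i}^{\perp}\}$, the cross term vanishes by $\langle\bm{B}_i\bm{B}_i^T\bm{u}_{X,i},\bm{u}_{X,i}^{\perp}\rangle=\langle\bm{u}_{X,i},\bm{u}_{X,i}^{\perp}\rangle=0$, producing the clean identity
\[
\sin^2\theta_i(t)=\bigl(1-\|\bm{B}_i\bm{B}_i^T\bm{u}_{X,i}^{\perp}\|^2\bigr)\sin^2(\alpha(1-t)).
\]
Differentiating gives $\tfrac{d}{dt}\sin^2\theta_i(t)=-\alpha\bigl(1-\|\bm{B}_i\bm{B}_i^T\bm{u}_{X,i}^{\perp}\|^2\bigr)\sin(2\alpha(1-t))$, which is nonpositive on $[0,1]$ because $2\alpha(1-t)\in[0,2\alpha)\subset[0,\pi)$, and direct inspection of the identity shows it vanishes exactly when $\theta_i(t)=0$.

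I do not expect a serious obstacle: once the containment $\bm{u}_{X,i}\in\mathrm{span}(\bm{B}_i)$ is exploited, the analysis reduces to plane trigonometry in $\mathrm{span}(\bm{u}_{0,i},\bm{u}_{X,i})$. The only mild subtleties are the sign choice for $\bm{u}_{X,i}$ (to send the geodesic from $\bm{u}_{0,i}$ toward $+\bm{u}_{X,i}$ rather than $-\bm{u}_{X,i}$) and ensuring $\alpha<\pi/2$ so that $\sin(2\alpha(1-t))$ keeps a definite sign on $[0,1]$; both are handled by the probability-one genericity argument in the first step.
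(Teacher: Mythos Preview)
Your proof is correct and follows essentially the same single-column reduction as the paper: fix the target $\bm{u}_{X,i}\in\mathrm{span}(\bm{B}_i)$, use genericity to ensure $\langle\bm{u}_{0,i},\bm{u}_{X,i}\rangle\neq 0$, and rotate within the two-plane $\mathrm{span}(\bm{u}_{0,i},\bm{u}_{X,i})$. The paper's own proof is one line: it writes the path as the normalized chord $\bm{u}_i(t)=\bigl((1-t)\bm{u}_{0,i}+t\bm{u}_{X,i}\bigr)/\|(1-t)\bm{u}_{0,i}+t\bm{u}_{X,i}\|$ and then simply invokes the computation already done for Theorem~\ref{thm:guarantee-rank-one}.

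The one genuine difference is your choice of parametrization. You use the constant-speed geodesic of Lemma~\ref{lem:Grassmann-Manifold} written in the basis $\{\bm{u}_{X,i},\bm{u}_{X,i}^{\perp}\}$, which yields the closed form $\sin^{2}\theta_i(t)=\bigl(1-\|\bm{B}_i\bm{B}_i^{T}\bm{u}_{X,i}^{\perp}\|^{2}\bigr)\sin^{2}\!\bigl(\alpha(1-t)\bigr)$ and makes the sign of the derivative and the equality case transparent for every $t\in[0,1]$. The paper's normalized-chord parametrization traces the same great-circle arc but leads (via the rank-one proof it cites) to a computation involving $L(t)$ and $L'(0)$ that is carried out only at $t=0$ and then extended by the continuation argument in Section~\ref{sub:pf-guarantee}. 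Your version is self-contained and arguably cleaner; the paper's buys brevity by reusing an earlier calculation verbatim.
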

\begin{proof}
Without loss of generality, we assume that $\left\langle \bm{u}_{0,i},\bm{u}_{X,i}\right\rangle >0$.
The desired continuous path is given by \[
\bm{u}_{i}\left(t\right)=\frac{\left(1-t\right)\bm{u}_{0,i}+t\bm{u}_{X,i}}{\left\Vert \left(1-t\right)\bm{u}_{0,i}+t\bm{u}_{X,i}\right\Vert },\; t\in\left[0,1\right].\]
The detailed arguments are the same as those in the proof of Theorem
\ref{thm:guarantee-rank-one}, and therefore omitted. 
\end{proof}
\vspace{0.03in}
 
\begin{remrk}
This corollary is similar to Theorems \ref{thm:guarantee-rank-one}
and \ref{thm:guarantee-full-sampling} in the sense that there exist
continuous paths along which the atomic functions decreases. 

At the same time, Corollary \ref{cor:General-Cases} differs from
Theorems \ref{thm:guarantee-rank-one} and \ref{thm:guarantee-full-sampling}
in two aspects. First, the paths $\bm{u}_{i}\left(t\right)$ in Corollary
\ref{cor:General-Cases} may be different for different $i$'s, while
in Theorems \ref{thm:guarantee-rank-one} and \ref{thm:guarantee-full-sampling},
a single continuous path $\bm{U}\left(t\right)$ is constructed. Second,
the angle $\theta_{i}$ in Corollay \ref{cor:General-Cases} is essentially
the principal angle between the 1-dimensional subspace $\mbox{span}\left(\bm{u}_{i}\left(t\right)\right)$
and the subspace $\mbox{span}\left(\bm{B}_{i}\right)$. In contrast,
Theorem \ref{thm:guarantee-rank-one} and \ref{thm:guarantee-full-sampling}
involve the minimum principal angle between the $r$-dimensional subspace
$\mbox{span}\left(\bm{U}\left(t\right)\right)$ and the subspace $\mbox{span}\left(\bm{B}_{i}\right)$.
\end{remrk}

\section{\label{sec:Conclusion}Conclusion}

We considered the problem of how to search for a consistent completion
of low-rank matrices. We showed that Frobenius norm combined with
a projection operator results in a discontinuous objective function
and therefore makes gradient descent approach fail. We proposed to
replace the Frobenius norm with the chordal distance. The chordal
distance is the {}``best'' smooth version of the Frobenius norm
in the sense that the solution set of the former is the closure of
the solution set of the latter. Based on the chordal distance, we
derived strong performance guarantees for two completion scenarios.
The derived performance guarantees do not rely on incoherence conditions
or large matrix sizes, and they hold with probability one.

\appendix

\subsection{\label{sub:pf-closure}Proof of Theorem \ref{thm:closure}}

We omit the subscript $i$ to simplify notation. The proof consists
of two parts, showing that: 
\begin{enumerate}
\item $\mathcal{U}_{F}\subset\mathcal{U}_{G}$; 
\item for any given $\bm{U}_{0}\in\mathcal{U}_{G}$, there exists a sequence
$\left\{ \bm{U}^{\left(n\right)}\right\} \subset\mathcal{U}_{F}$
such that $\lim_{n\rightarrow\infty}\left\Vert \bm{U}_{0}-\bm{U}^{\left(n\right)}\right\Vert _{F}=0$. 
\end{enumerate}
\vspace{0in}

We start by proving that $\mathcal{U}_{F}\subset\mathcal{U}_{G}$.
For any given $\bm{U}\in\mathcal{U}_{F}$, there exists a nonzero
vector $\bm{w}\in\mathbb{R}^{r}$ such that $\bm{U}_{\Omega}\bm{w}=\bm{x}_{\Omega}$.
Let $\bm{b}=\bm{U}\bm{w}/\left\Vert \bm{w}\right\Vert $. Clearly,
$\left\Vert \bm{b}\right\Vert _{F}=1$. Recall the formula for $\bm{B}_{\bm{x}_{\Omega}}$.
We can write $\bm{b}$ as a linear combination of columns of $\bm{B}_{\bm{x}_{\Omega}}$:\begin{align*}
\bm{b} & =\frac{1}{\left\Vert \bm{w}\right\Vert }\bm{x}_{\Omega}+\sum_{j\in\Omega^{c}}b_{j}\bm{e}_{j}=\frac{\left\Vert \bm{x}_{\Omega}\right\Vert }{\left\Vert \bm{w}\right\Vert }\bar{\bm{x}}_{\Omega}+\sum_{j\in\Omega^{c}}b_{j}\bm{e}_{j}.\end{align*}
 As a result, \[
\left\Vert \bm{B}_{\bm{x}_{\Omega}}^{T}\bm{b}\right\Vert _{F}=\left\Vert \bm{B}_{\bm{x}_{\Omega}}^{T}\bm{U}\frac{\bm{w}}{\left\Vert \bm{w}\right\Vert _{F}}\right\Vert _{F}=1.\]
 It follows that the largest singular value of $\bm{B}_{\bm{x}_{\Omega}}^{T}\bm{U}$
is one. Therefore, $\bm{U}\in\mathcal{U}_{G}$, and we thus have $\mathcal{U}_{F}\subset\mathcal{U}_{G}$.

To prove the second part, we make use of the following notation. For
any given $\bm{U}_{0}\in\mathcal{U}_{G}$, let $\bm{u}_{1},\cdots,\bm{u}_{r}$
be the left singular vectors of the matrix $\bm{U}_{0}\bm{U}_{0}^{T}\bm{B}_{\bm{x}_{\Omega}}$
corresponding to the $i^{th}$ largest singular value. Let $k$ be
the multiplicity of the singular value one, i.e., the number of singular
values that equal to one. Let $\bm{U}_{1:k}=\left[\bm{u}_{1},\cdots,\bm{u}_{k}\right]$
and $\bm{U}_{k+1:r}=\left[\bm{u}_{k+1},\cdots,\bm{u}_{r}\right]$.
Clearly, $\lambda_{\max}\left(\bm{U}_{k+1:r}^{T}\bm{B}_{\bm{x}_{\Omega}}\right)<1$.

It suffices to focus on $\bm{U}$ instead of $\bm{U}_{0}$. That is,
to prove the second part, it suffices to find a sequence in $\mathcal{U}_{F}$
converging to $\bm{U}$. To verify this claim, let $\bm{V}=\bm{U}^{T}\bm{U}_{0}$.
Then $\bm{V}\in\mathcal{U}_{r,r}$ and $\bm{U}_{0}=\bm{U}\bm{V}$.
Suppose that $\left\{ \bm{U}^{\left(n\right)}\right\} \subset\mathcal{U}_{F}$
is a sequence such that $\bm{U}^{\left(n\right)}\rightarrow\bm{U}$.
It is clear that $\bm{U}^{\left(n\right)}\bm{V}\rightarrow\bm{U}\bm{V}=\bm{U}_{0}$.
Furthermore, since \[
\bm{x}_{\Omega}=\bm{U}_{\Omega}^{\left(n\right)}\bm{w}^{\left(n\right)}=\bm{U}_{\Omega}^{\left(n\right)}\bm{V}\left(\bm{V}^{T}\bm{w}^{\left(n\right)}\right)=\left(\bm{U}^{\left(n\right)}\bm{V}\right)_{\Omega}\bm{w}^{\prime\left(n\right)},\]
 one has $\bm{U}^{\left(n\right)}\bm{V}\in\mathcal{U}_{F}$. The sequence
$\left\{ \bm{U}^{\left(n\right)}\bm{V}\right\} \subset\mathcal{U}_{F}$
is the desired sequence that converges to $\bm{U}_{0}$. It is also
important to note that $\bm{U}\in\mathcal{U}_{G}$, since \[
\lambda\left(\bm{U}_{0}\bm{U}_{0}^{T}\bm{B}_{\bm{x}_{\Omega}}\right)=\lambda\left(\bm{U}\bm{V}\bm{V}^{T}\bm{U}^{T}\bm{B}_{\bm{x}_{\Omega}}\right)=\lambda\left(\bm{U}\bm{U}^{T}\bm{B}_{\bm{x}_{\Omega}}\right).\]

We claim that \begin{equation}
\bm{U}\in\mathcal{U}_{F}\;\mbox{if and only if }\bm{U}_{1:k,\Omega}\ne\bm{0}.\label{eq:U-UF}\end{equation}
To prove this claim, we shall show that \begin{equation}
\bm{U}_{1:k,\Omega}\ne\bm{0}\Rightarrow\bm{U}\in\mathcal{U}_{F}\label{eq:U-in-UF}\end{equation}
 and \begin{equation}
\bm{U}_{1:k,\Omega}=\bm{0}\Rightarrow\bm{U}\notin\mathcal{U}_{F}.\label{eq:U-notin-UF}\end{equation}

To prove (\ref{eq:U-in-UF}), suppose that $\bm{U}_{1:k,\Omega}\ne0$.
Without loss of generality, let $\bm{u}_{1,\Omega}\ne\bm{0}$. Since
$\bm{u}_{1}$ is the left singular vector corresponding to the singular
value equal to one, $\bm{u}_{1}$ can be written as a linear combination
of the columns of $\bm{B}_{\bm{x}_{\Omega}}$: $\bm{u}_{1}=a_{1}\bar{\bm{x}}_{\Omega}+\sum_{j\in\Omega^{c}}a_{j}\bm{e}_{j}$.
Since $\bm{u}_{1,\Omega}=a_{1}\bar{\bm{x}}_{\Omega}\ne\bm{0}$, one
has $a_{1}\ne0$. As a result, $\bm{x}_{\Omega}=a\bm{u}_{1,\Omega}$
for some constant $a\ne0$. Hence, $d_{F}\left(\bm{x}_{\Omega},\bm{U}\right)=0$
and $\bm{U}\in\mathcal{U}_{F}$.

To prove (\ref{eq:U-notin-UF}), assume that $\bm{U}_{1:k,\Omega}=\bm{0}$.
Since $\mathcal{P}\left(\bm{x}_{\Omega},\bm{U}_{\Omega}\right)=\mathcal{P}\left(\bm{x}_{\Omega},\bm{U}_{k+1:r,\Omega}\right)$,
proving that $\bm{U}\notin\mathcal{U}_{F}$ is equivalent to proving
that $\bm{x}_{\Omega}-\mathcal{P}\left(\bm{x}_{\Omega},\bm{U}_{k+1:r,\Omega}\right)\ne\bm{0}$.
This inequality can be proved by contradiction. Suppose that we have
an equality. Then there exists a vector $\bm{w}\in\mathbb{R}^{r-k}$
such that $\bm{U}_{k+1:r,\Omega}\bm{w}=\bm{x}_{\Omega}$. Let $\bm{b}=\bm{U}_{k+1:r}\bm{w}/\left\Vert \bm{w}\right\Vert $.
It is straightforward to show (using similar arguments as the ones
used for proving $\mathcal{U}_{F}\subset\mathcal{U}_{G}$) that $\bm{b}\in\mbox{span}\left(\bm{B}_{\bm{x}_{\Omega}}\right)$
and the largest singular value of $\bm{U}_{k+1:r}^{T}\bm{B}_{\bm{x}_{\Omega}}$
is one. This contradicts the fact that $\lambda_{\max}\left(\bm{U}_{k+1:r}^{T}\bm{B}_{\bm{x}_{\Omega}}\right)<1$.

Now we are ready to construct a sequence in $\mathcal{U}_{F}$ converging
to $\bm{U}$. If $\bm{U}_{1:k,\Omega}\ne0$, then $\bm{U}\in\mathcal{U}_{F}$
and it is trivial to find a sequence in $\mathcal{U}_{F}$ converging
to $\bm{U}$. It remains to find a sequence $\left\{ \bm{U}^{\left(n\right)}\right\} \subset\mathcal{U}_{F}$
that converges to $\bm{U}$ when $\bm{U}_{1:k,\Omega}=0$. Define
$\bm{x}_{r}=\bm{x}_{\Omega}-\mathcal{P}\left(\bm{x}_{\Omega},\bm{U}_{\Omega}\right)$.
Since $\bm{U}_{1:k,\Omega}=0$, one has $\bm{U}\notin\mathcal{U}_{F}$
and $\bm{x}_{r}\ne\bm{0}$. Note that $\bm{x}_{r,\Omega^{c}}=\bm{0}$
and that $\bm{x}_{r,\Omega}\perp\bm{u}_{i,\Omega}$ for all $i\in\left[r\right]$.
It can be verified that $\bm{x}_{r}\perp\bm{u}_{1}$, $\cdots$, $\bm{x}_{r}\perp\bm{u}_{r}$.
Let \[
\bm{U}_{\epsilon}=\left[\frac{\bm{u}_{1}+\epsilon\bm{x}_{r}}{\sqrt{1+\epsilon^{2}}},\bm{u}_{2},\cdots,\bm{u}_{r}\right].\]
 It can be verified that $\bm{U}_{\epsilon}\in\mathcal{U}_{m,r}$.
Furthermore, $\mathcal{P}\left(\bm{x}_{\Omega,}\bm{U}_{\Omega}\right)=\mathcal{P}\left(\bm{x}_{\Omega},\left[\bm{x}_{r},\bm{U}_{k+1:r,\Omega}\right]\right)=\bm{x}_{\Omega}$
and therefore $\bm{U}_{\epsilon}\in\mathcal{U}_{F}$ for all $\epsilon\ne0$.
Now choose a sequence $\left\{ \bm{U}^{\left(n\right)}\right\} =\left\{ \bm{U}_{1/n}\right\} $.
It is a sequence in $\mathcal{U}_{F}$ and it converges to $\bm{U}$.
This completes the proof.

\subsection{\label{sub:pf-rank-1}Proof of Theorem \ref{thm:guarantee-rank-one}}

Since $\bm{X}_{\Omega}$ is generated from a rank-one matrix, there
exists a $\bm{u}_{\bm{X}}\in\mathcal{U}_{m,1}$ such that $\bm{u}_{\bm{X}}\in\mbox{span}\left(\bm{B}_{i}\right)$
for all $i\in\left[n\right]$. Without loss of generality, we assume
$\left\langle \bm{u},\bm{u}_{\bm{X}}\right\rangle >0$: by Assumption
I, $\left\langle \bm{u},\bm{u}_{\bm{X}}\right\rangle \ne0$; if $\left\langle \bm{u},\bm{u}_{\bm{X}}\right\rangle <0$,
we replace $\bm{u}_{\bm{X}}$ with $-\bm{u}_{\bm{X}}$.

Now define \[
\bm{u}\left(t\right)=\frac{\left(1-t\right)\bm{u}_{0}+t\bm{u}_{\bm{X}}}{\left\Vert \left(1-t\right)\bm{u}_{0}+t\bm{u}_{\bm{X}}\right\Vert }=\frac{\left(1-t\right)\bm{u}_{0}+t\bm{u}_{\bm{X}}}{L\left(t\right)},\]
 where $L\left(t\right)\triangleq\left\Vert \left(1-t\right)\bm{u}_{0}+t\bm{u}_{\bm{X}}\right\Vert .$
Clearly $\bm{u}\left(0\right)=\bm{u}_{0}$ and $\bm{u}\left(t\right)\in\mathcal{U}_{m,1}$
in a neighborhood of $t=0$.

For every $i\in\left[n\right]$, we shall show that \begin{equation}
\left.\frac{d}{dt}\right|_{t=0}\sin^{2}\theta_{i}=-2\left.\frac{d}{dt}\right|_{t=0}\left(\frac{1}{2}\cos^{2}\theta_{i}\right)\le0,\label{eq:pf-rank1-dsin2}\end{equation}
 where the equality holds if and only if $\theta_{i}=0$. Let $\mathcal{P}_{i}\bm{u}$
denote the vector $\mathcal{P}\left(\bm{u},\bm{B}_{i}\right)=\bm{B}_{i}\bm{B}_{i}^{T}\bm{u}$.
Since $\bm{u}_{\bm{X}}\in\mbox{span}\left(\bm{B}_{i}\right)$, one
has \[
\mathcal{P}_{i}\bm{u}=\frac{1}{L\left(t\right)}\left(\left(1-t\right)\mathcal{P}_{i}\bm{u}_{0}+t\bm{u}_{\bm{X}}\right).\]
 We then have\begin{align*}
 & \left.\frac{d}{dt}\right|_{t=0}\left(\frac{1}{2}\cos^{2}\theta_{i}\right)=\left.\frac{d}{dt}\right|_{t=0}\frac{1}{2}\left\Vert \mathcal{P}_{i}\bm{u}\right\Vert ^{2}\\
 & \quad=\left.\frac{d}{dt}\right|_{t=0}\left[\frac{1}{2}\left(\frac{1-t}{L\left(t\right)}\right)^{2}\left\Vert \mathcal{P}_{i}\bm{u}_{0}\right\Vert ^{2}\right.\\
 & \quad\quad\left.+\frac{1}{2}\left(\frac{t}{L\left(t\right)}\right)^{2}+\frac{\left(t-t^{2}\right)}{L^{2}\left(t\right)}\left\langle \mathcal{P}_{i}\bm{u}_{0},\bm{u}_{\bm{X}}\right\rangle \right]\\
 & \quad=\left(-1-L^{\prime}\left(0\right)\right)\left\Vert \mathcal{P}_{i}\bm{u}_{0}\right\Vert ^{2}+\left\langle \mathcal{P}_{i}\bm{u}_{0},\bm{u}_{\bm{X}}\right\rangle .\end{align*}
 Note that\[
\left\langle \mathcal{P}_{i}\bm{u}_{0},\bm{u}_{\bm{X}}\right\rangle =\bm{u}_{\bm{X}}^{T}\bm{B}_{i}\bm{B}_{i}^{T}\bm{u}_{0}=\left\langle \bm{u}_{0},\mathcal{P}_{i}\bm{u}_{\bm{X}}\right\rangle =\left\langle \bm{u}_{0},\bm{u}_{\bm{X}}\right\rangle .\]
Consequently, \begin{equation}
\left.\frac{d}{dt}\right|_{t=0}\left(\frac{1}{2}\cos^{2}\theta_{i}\right)=\left(-1-L^{\prime}\left(0\right)\right)\left\Vert \mathcal{P}_{i}\bm{u}_{0}\right\Vert ^{2}+\left\langle \bm{u}_{0},\bm{u}_{\bm{X}}\right\rangle .\label{eq:pf-rank1-dcos2}\end{equation}
 The term $L^{\prime}\left(0\right)$ can be computed as follows.
Note that \begin{align*}
L^{2}\left(t\right) & =\left(1-t\right)^{2}\left\Vert \bm{u}_{0}\right\Vert ^{2}+t^{2}\left\Vert \bm{u}_{\bm{X}}\right\Vert +2\left(t-t^{2}\right)\left\langle \bm{u}_{0},\bm{u}_{\bm{X}}\right\rangle \\
 & =1-2t+2t^{2}+2\left(t-t^{2}\right)\left\langle \bm{u}_{0},\bm{u}_{\bm{X}}\right\rangle .\end{align*}
 Therefore, \begin{align*}
\left.\frac{d}{dt}\right|_{t=0}L^{2}\left(t\right) & =-2+2\left\langle \bm{u}_{0},\bm{u}_{\bm{X}}\right\rangle =2L\left(0\right)L^{\prime}\left(0\right).\end{align*}
As a result, \begin{equation}
L^{\prime}\left(0\right)=-1+\left\langle \bm{u}_{0},\bm{u}_{\bm{X}}\right\rangle .\label{eq:pf-rank1-dL}\end{equation}
 Substituting (\ref{eq:pf-rank1-dL}) into (\ref{eq:pf-rank1-dcos2})
one can see that \[
\left.\frac{d}{dt}\right|_{t=0}\left(\frac{1}{2}\cos^{2}\theta_{i}\right)=\left\langle \bm{u}_{0},\bm{u}_{\bm{X}}\right\rangle \left(1-\left\Vert \mathcal{P}_{i}\bm{u}_{0}\right\Vert ^{2}\right)\ge0,\]
 where the equality holds if and only if $\left\Vert \mathcal{P}_{i}\bm{u}_{0}\right\Vert =1$,
i.e., $\bm{u}_{0}\in\mbox{span}\left(\bm{B}_{i}\right)$ and $\theta_{i}=0$.
This completes the proof.

\subsection{\label{sub:pf-full-sampling}Proof of Theorem \ref{thm:guarantee-full-sampling}}

Let $\bm{U}_{\bm{X}}\in\mathcal{U}_{m,r}$ be such that every column
of $\bm{X}$ is in the subspace $\mbox{span}\left(\bm{U}_{\bm{X}}\right)$.
Consider the compact singular decomposition $\bm{U}_{0}\bm{U}_{0}^{T}\bm{U}_{\bm{X}}\bm{U}_{\bm{X}}^{T}=\bm{U}_{0}^{\prime}\bm{S}\bm{U}_{\bm{X}}^{\prime T}$,
where $\bm{S}\in\mathbb{R}^{r\times r}$ is the diagonal matrix containing
the singular values and $\bm{U}_{0}^{\prime}$ and $\bm{U}_{\bm{X}}^{\prime}$
are the left and right singular vector matrices, respectively. Clearly,
$\bm{U}_{0}$ and $\bm{U}_{0}^{\prime}$ generate the same subspace,
and so do $\bm{U}_{\bm{X}}$ and $\bm{U}_{\bm{X}}^{\prime}$. For
simplicity, we present our proof for $\bm{U}_{0}^{\prime}$ and $\bm{U}_{\bm{X}}^{\prime}$
and omit the superscripts. With this simplification, one has $\bm{U}^{T}\bm{U}_{\bm{X}}=\bm{S}=\mbox{diag}\left(\left[\lambda_{1},\cdots,\lambda_{r}\right]\right)$.

For the $i^{th}$ column of $\bm{X}$, we compute $\nabla_{\bm{U}_{0}}\cos\theta_{i}$.
Since we are considering the full sampling case, we have $\bm{B}_{i}=\bar{\bm{x}}_{i}$.
Because $\bar{\bm{x}}_{i}\in\mbox{span}\left(\bm{U}_{\bm{X}}\right)$,
there exists $\bar{\bm{w}}\in\mathcal{U}_{r,1}$ such that $\bar{\bm{x}}_{i}=\bm{U}_{\bm{X}}\bar{\bm{w}}$.
To compute $\nabla_{\bm{U}_{0}}\cos\theta_{i}$, we need the first
left and the first right singular vectors of the matrix $\bar{\bm{x}}_{i}\bar{\bm{x}}_{i}^{T}\bm{U}_{0}$.
The first left singular vector is clearly $\bar{\bm{x}}_{i}$ and
the first right singular vector equals  $\bm{U}_{0}^{T}\bar{\bm{x}}_{i}=\bm{U}_{0}^{T}\bm{U}_{\bm{X}}\bar{\bm{w}}=\bm{S}\bar{\bm{w}}$.
Hence, \begin{align*}
\nabla_{\bm{U}_{0}}\cos\theta_{i} & =\left(\bm{I}-\bm{U}_{0}\bm{U}_{0}^{T}\right)\bar{\bm{x}}_{i}\bar{\bm{w}}^{T}\bm{S}^{T}\\
 & =\left(\bm{I}-\bm{U}_{0}\bm{U}_{0}^{T}\right)\bm{U}_{\bm{X}}\bar{\bm{w}}\bar{\bm{w}}^{T}\bm{S}^{T}.\end{align*}
 According to Lemma \ref{lem:Grassmann-Manifold}, $\left(\bm{I}-\bm{U}_{0}\bm{U}_{0}^{T}\right)\bm{U}_{\bm{X}}$
can be written as $\bm{G}\mbox{diag}\left(\left[\sin\alpha_{1},\cdots,\sin\alpha_{j}\right]\right)$,
where $\bm{G}=\left[\bm{g}_{1},\cdots,\bm{g}_{r}\right]\in\mathcal{U}_{m,r}$,
and $\alpha_{i}=\cos^{-1}\lambda_{i}$'s, $i=1,\cdots,r$, are the
principal angles between $\mbox{span}\left(\bm{U}_{0}\right)$ and
$\mbox{span}\left(\bm{U}_{\bm{X}}\right)$.

We consider the geodesic $\bm{U}\left(t\right)$ from $\bm{U}_{0}$
to $\bm{U}_{\bm{X}}$. In Lemma \ref{lem:Grassmann-Manifold} (part
1), we show that this geodesic is given by the $\bm{U}\left(t\right)$
satisfying $\bm{U}\left(0\right)=\bm{U}_{0}$ and $\dot{\bm{U}}\left(0\right)=\bm{G}\mbox{diag}\left(\left[\alpha_{1},\cdots,\alpha_{r}\right]\right)$.
Along this path, we have \begin{align}
 & \left.\frac{d}{dt}\right|_{t=0}\cos\theta_{i}=\left\langle \nabla_{\bm{U}_{0}}\cos\theta_{i},\bm{G}\mbox{diag}\left(\left[\alpha_{1},\cdots,\alpha_{r}\right]\right)\right\rangle \nonumber \\
 & =\mbox{trace}\left(\left(\bm{G}\mbox{diag}\left(\left[\alpha_{1},\cdots,\alpha_{r}\right]\right)\right)^{T}\right.\nonumber \\
 & \quad\left.\left(\left(\bm{I}-\bm{U}_{0}\bm{U}_{0}^{T}\right)\bm{U}_{\bm{X}}\right)\bar{\bm{w}}\bar{\bm{w}}^{T}\bm{S}^{T}\right)\nonumber \\
 & =\mbox{trace}\left(\mbox{diag}\left(\left[\cdots,\alpha_{j}\sin\alpha_{j},\cdots\right]\right)\bar{\bm{w}}\bar{\bm{w}}^{T}\bm{S}\right)\nonumber \\
 & =\mbox{trace}\left(\left(\bm{I}-\bm{S}^{2}\right)\bar{\bm{w}}\bar{\bm{w}}^{T}\bm{S}\right)\nonumber \\
 & =\sum_{j=1}^{r}\bar{w}_{j}^{2}\alpha_{j}\sin\alpha_{j}\cos\alpha_{j}\ge0.\label{eq:pf-fullsampling-dcos}\end{align}

We claim that under Assumption II, equality in (\ref{eq:pf-fullsampling-dcos})
holds if and only if $\theta_{i}=0$. If $\theta_{i}=0$, then $\bar{\bm{x}}_{i}\in\mbox{span}\left(\bm{U}_{0}\right)$.
According to Lemma \ref{lem:Grassmann-Manifold} (part 2), $\bar{w}_{j}=0$
for all $j$ such that $\alpha_{j}\ne0$. The equality in (\ref{eq:pf-fullsampling-dcos})
thus holds. Otherwise, if $\theta_{i}\ne0$, then $\bar{\bm{x}}_{i}\notin\mbox{span}\left(\bm{U}_{0}\right)$.
Again, according to Lemma \ref{lem:Grassmann-Manifold} (part 2),
there exists an $j\in\left[r\right]$ such that $\alpha_{i}>0$ and
$\bar{w}_{j}\ne0$. Hence, we have a strict inequality in (\ref{eq:pf-fullsampling-dcos}).
Finally, note that \[
\left.\frac{d}{dt}\right|_{t=0}\sin^{2}\theta_{i}=-2\left.\frac{d}{dt}\right|_{t=0}\cos\theta_{i}\le0.\]
 This proves the theorem. 

\bibliographystyle{ieeetr}
\bibliography{MatrixCompletion}

\end{document}